\def\BibTeX{{\rm B\kern-.05em{\sc i\kern-.025em b}\kern-.08em
    T\kern-.1667em\lower.7ex\hbox{E}\kern-.125emX}}
\def\ie{\textit{i.e.}\xspace}
\def\eg{\textit{e.g.}\xspace}
\newtheorem{theorem}{Theorem}
\newtheorem{lemma}{Lemma}
\newtheorem{assumption}{Assumption}
\renewcommand{\maketag@@@}[1]{\hbox{\m@th\normalsize\normalfont#1}}%
\begin{document}

\title{
	Adaptive Control of Client Selection and Gradient Compression for Efficient Federated Learning
}

\author{\IEEEauthorblockN{Zhida Jiang$^{1,2}$  \ \ Yang Xu$^{*1,2}$ \ \  Hongli Xu$^{*1,2}$  \ \ Zhiyuan Wang$^{1,2}$ \  \   Chen Qian$^3$ }
	\IEEEauthorblockA{
		$^1$School of Computer Science and Technology, University of Science and Technology of China\\
		$^2$Suzhou Institute for Advanced Research, University of Science and Technology of China\\
		$^3$Department of Computer Science and Engineering, Jack Baskin School of Engineering, University of California, Santa Cru\\
} }

\maketitle

\begin{abstract}
	Federated learning (FL) allows multiple clients cooperatively train models without disclosing local data.
However, the existing works fail to address all these practical concerns in FL: limited communication resources, dynamic network conditions and heterogeneous client properties, which slow down the convergence of FL.
To tackle the above challenges, we propose a heterogeneity-aware FL framework, called FedCG, with adaptive client selection and gradient compression.
Specifically, the parameter server (PS) selects a representative client subset considering statistical heterogeneity and sends the global model to them.
After local training, these selected clients upload compressed model updates matching their capabilities to the PS for aggregation, which significantly alleviates the communication load and mitigates the straggler effect.
We theoretically analyze the impact of both client selection and gradient compression on convergence performance.
Guided by the derived convergence rate, we develop an iteration-based algorithm to jointly optimize client selection and compression ratio decision using submodular maximization and linear programming.
Extensive experiments on both real-world prototypes and simulations show that FedCG can provide up to 5.3$\times$ speedup compared to other methods. 
\end{abstract}

\begin{IEEEkeywords}
	\emph{Federated Learning, Heterogeneity, Client Selection, Gradient Compression}
\end{IEEEkeywords}

\section{Introduction}\label{sec:intro}
Recently, federated learning (FL) \cite{mcmahan2017communication} as a novel distributed machine learning paradigm has attracted a lot of attention.
In FL, training data are distributed across a large number of edge devices, such as mobile phones, personal computers, or smart home devices.
Under the orchestration of the parameter server (PS), these devices (\ie, clients) cooperatively train a global inference model without sharing raw data, which efficiently leverages local computing resources of edge devices and addresses data privacy concerns.
With the technical advantages and implemental feasibilities, FL has been applied in a variety of applications, such as next word prediction, extended reality, and smart manufacturing \cite{kairouz2021advances}.

Despite its practical effectiveness, there are several key challenges unique to the FL setting that make it difficult to train high-quality models.
\textit{(1) Limited communication resources.} 
Since the clients participating in FL need to communicate with the PS iteratively over bandwidth-limited networks, the communication cost is prohibitive and forms a huge impediment to FL’s viability, especially when training modern models with millions of parameters \cite{xu2021deepreduce}.
\textit{(2) Dynamic network conditions.} Owing to link instability and bandwidth competition, the communication conditions of wireless channels may fluctuate over time, resulting in dynamics of available bandwidth \cite{cui2021optimal}. For example, a user’s smartphone may be allocated higher bandwidth when transmitting model updates at night than during the day.
\textit{(3) Heterogeneous client properties.} The heterogeneity of the clients usually includes capability heterogeneity and statistical heterogeneity.
The clients may be equipped with different computing chips and located in diverse regions, thus their capabilities vary significantly \cite{nishio2019client}. The stragglers will delay the aggregation step and make the training process inefficient.
Besides, due to different user preferences and contexts, local data on each client are not independent and identically distributed (non-IID).
For instance, the images collected by the cameras reflect the demographics of each camera's location.
Heterogeneous statistical data will bring the biases in training and eventually cause an accuracy degradation of FL \cite{wang2020optimizing}.

To improve communication efficiency of FL, a natural solution is to reduce the size of transmitted payload or select only a fraction of clients participating in training.
The existing works have adopted quantization \cite{tang2018communication,alistarh2017qsgd,wen2017terngrad,bernstein2018signsgd,cui2021optimal} or sparsification \cite{aji2017sparse,stich2018sparsified,sattler2019robust,abdelmoniem2021dc2,DBLP:conf/iclr/LinHM0D18,wangni2018gradient,han2020adaptive,xu2021deepreduce} techniques to relax the communication load.
But these compression algorithms often assign fixed or identical compression ratios to all clients, which are agnostic to the capability heterogeneity and thereby result in considerable completion time lags.
Besides, these compression schemes \cite{li2021talk} do not take statistical heterogeneity into account, deteriorating training efficiency in the presence of non-IID data.
Another line of studies aims to design client selection (or client sampling) schemes based on heterogeneous client properties \cite{cho2020client,mohammed2020budgeted,chen2020optimal,balakrishnan2022diverse,wang2020optimizing,tang2021fedgp,rizk2021optimal,nishio2019client,perazzone2022communication,shi2020joint,chai2020tifl,jin2020resource}, most of which lack joint consideration of capability and statistical heterogeneity.
Although some works take two types of heterogeneity into account \cite{luo2021tackling}, the derived client selection probabilities are fixed during the training process, which cannot adapt to network dynamics.

In summary, most prior works fail to address all the aforementioned challenges, thereby hindering efficient FL.
This motivates us to study the following question: \textit{how to enhance FL by simultaneously addressing the challenges of communication efficiency, network dynamics and client heterogeneity?}

To tackle this problem, we propose a heterogeneity-aware FL framework, called FedCG (Federated Learning with \textbf{C}lient selection and \textbf{G}radient compression).
At each round, the PS selects a diverse subset of clients that carry representative gradient information and then sends the global model to the selected clients.
After local training, these clients adopt gradient compression to further boost communication efficiency.
FedCG adaptively assigns appropriate compression ratios to selected clients based on their heterogeneous and time-varying capabilities.
In this way, each client uploads compressed model updates matching its capabilities to the PS.
Finally, the PS aggregates the model updates to obtain the latest global model.
Under this framework, our advantages are reflected in two aspects.
On one hand, we select a representative client subset such that their aggregated model updates approximate full client aggregation \cite{balakrishnan2022diverse}.
By encouraging diversity in client selection, FedCG can effectively reduce redundant communication and promote fairness, which modulates the skew introduced by non-IID data.
On the other hand, different compression ratios will adapt to dynamic network conditions and heterogeneous capabilities, which contributes to mitigating the straggler effect and thus accelerates the training process.

More importantly, instead of directly combining client selection and gradient compression, we highlight that their decisions are interacted and demonstrate the need for joint optimization.
Specifically, the compression ratios should be adapted to the heterogeneous capabilities of the selected clients.
Correspondingly, client selection is also bound up with the degree of gradient compression. Selecting clients with over-compressed gradients will impede convergence.
As a result, the naive combination of existing client selection and gradient compression schemes cannot adequately address the key challenges of FL and may degrade training performance, which is empirically verified in Section \ref{sec:evaluation}.

However, jointly optimizing client selection and compression ratio is non-trivial for the following reasons.
\textit{Firstly}, the quantitative relationship between client selection, gradient compression and model convergence is unclear.
\textit{Secondly}, it is difficult to determine the proper compression ratios to achieve a delicate trade-off between resource overhead and model accuracy. Things will get even worse while considering the capability heterogeneity across different clients.
\textit{Thirdly}, the tightly coupled problem of client selection and compression ratio decision adds additional challenges to algorithm design.
In light of the above discussion, we state the key contributions of this paper as follows:
\begin{itemize}
	\setlength{\itemsep}{2pt}
	\setlength{\parsep}{2pt}
	\setlength{\parskip}{2pt}
	\item We propose a novel FL framework, called FedCG, which addresses the challenges of communication efficiency, network dynamics and client heterogeneity by adaptive client selection and gradient compression. We theoretically analyze the impact of client selection and gradient compression on convergence performance.
	
	\item Guided by the convergence analysis, we apply submodular maximization to select diverse clients, and determine different compression ratios for heterogeneous clients to achieve the trade-off between overhead and accuracy.
	We develop an iteration-based algorithm to jointly optimize client selection and compression ratio decision for the tightly coupled problem.
	
	\item We evaluate the performance of our proposed framework on both a hardware platform and a simulated environment.
	Extensive experimental results demonstrate that for both convex and non-convex machine learning models, FedCG can provide up to 5.3$\times$ speedup compared to state-of-the-art methods.
\end{itemize}

The remainder of this paper is organized as follows. Section
\ref{sec:related} reviews related work. Section \ref{sec:prelim} introduces our proposed framework and formulates the optimization problem.
Section \ref{sec:convergence} provides the convergence analysis of FedCG. Section \ref{sec:algorithm} designs a joint optimization algorithm for client selection and compression ratio decision. Section \ref{sec:evaluation} presents experimental results, and finally Section \ref{sec:conclusion} concludes the paper.

\section{Related Work}\label{sec:related}
In FL, the iterative communication between the PS and clients will incur considerable costs, particularly when the underlying model is of high complexity \cite{xu2021deepreduce}.
To this end, various works have been devoted to improving communication efficiency by reducing the size of transmitted models/gradients or selecting a subset of clients.
On one hand, compression techniques have been adopted to alleviate the transmission burden, including quantization \cite{tang2018communication,alistarh2017qsgd,wen2017terngrad,bernstein2018signsgd,cui2021optimal} and sparsification \cite{aji2017sparse,stich2018sparsified,sattler2019robust,abdelmoniem2021dc2,DBLP:conf/iclr/LinHM0D18,wangni2018gradient,han2020adaptive,xu2021deepreduce}.
The quantization based methods \cite{tang2018communication,alistarh2017qsgd,wen2017terngrad,bernstein2018signsgd} aim to represent each element with fewer bits, such as QSGD \cite{alistarh2017qsgd} and signSGD \cite{bernstein2018signsgd}.
Optimal compression ratio allocation for quantization is considered in \cite{cui2021optimal}.
Other studies \cite{aji2017sparse,stich2018sparsified,sattler2019robust,abdelmoniem2021dc2,DBLP:conf/iclr/LinHM0D18,wangni2018gradient,han2020adaptive,xu2021deepreduce} apply sparsification to transmit a small subset of gradients so that the communication overhead can be reduced dramatically.
However, the aforementioned works assign identical compression ratios to heterogeneous clients and thus the stragglers with poor channel conditions will become the bottleneck of model training.
The authors in \cite{li2021talk} provide client-specific compression schemes according to communication heterogeneity. However, they do not consider statistical heterogeneity and thus exhibit poor performance in the presence of non-IID data, in terms of model accuracy and convergence rate.

On the other hand, client selection plays a critical role in FL and has been extensively studied in previous works.
In the common implementation, clients are selected uniformly at random or proportional to local dataset size \cite{mcmahan2017communication,DBLP:conf/iclr/LiHYWZ20}, which results in poor training performance and long latency due to non-IID data and capability heterogeneity \cite{luo2021tackling}.
Considering the statistical property, some sampling methods have investigated different criteria to evaluate the importance of clients, such as local loss \cite{cho2020client}, test accuracy \cite{mohammed2020budgeted}, model updates \cite{chen2020optimal,balakrishnan2022diverse,wang2020optimizing}, client correlations \cite{tang2021fedgp}, and local data variability \cite{rizk2021optimal}.
However, the above strategies ignore the heterogeneity of clients' capabilities and may suffer from the straggler effect.
Other studies \cite{nishio2019client,perazzone2022communication,shi2020joint,chai2020tifl,jin2020resource} have designed client selection schemes that tackle heterogeneous system resources for fast convergence, but non-IID data still hurt the model accuracy.
A very recent work \cite{luo2021tackling} optimizes client selection probabilities while accounting for both data and capability heterogeneity. 
In this solution, the exchange of complete models incurs exorbitant communication cost and the obtained probabilities cannot be adaptively adjusted as training progresses, which ignores time-varying network conditions and thus exhibits less flexibility.
Compared with the prior works, FedCG can \textit{simultaneously} cope with the challenges of communication efficiency, network dynamics and client heterogeneity by joint optimization of client selection and gradient compression.

\section{Preliminaries and Problem Formulation}\label{sec:prelim}

\subsection{Federated Learning Basics}
The goal of FL is to train a high-quality model through a loose federation of clients, which is coordinated by the PS.
We suppose there is a set $\mathcal{N}=\{1,2,...,N\}$ of clients participating in FL.
Each client $n\in \mathcal{N}$ has its local dataset $\mathcal{D}_n$ with the size of $|\mathcal{D}_n|$.
The local loss function of client $n$ on the collection of data samples is defined as:
\vspace{-1mm}
\begin{equation}\small
F_n(\mathbf{x})=\frac{1}{|\mathcal{D}_n|}\sum_{\xi\in \mathcal{D}_n} f_n(\mathbf{x};\xi),
\vspace{-2mm}
\end{equation}
where $\mathbf{x}$ is the model parameter vector and $f_n(\mathbf{x};\xi)$ is the loss function calculated by a specific sample $\xi$.
FL seeks to minimize the global loss function $F(\mathbf{x})$, which translates into the following optimization problem:
\vspace{-1.5mm}
\begin{equation}\label{eq1}\small
	\min \limits_{\mathbf{x}} \  F(\mathbf{x})= \sum_{n=1}^{N} p_nF_n(\mathbf{x}),
	\vspace{-2mm}
\end{equation}
where $p_n$ represents the weight of client $n$ with $\sum_{n=1}^{N}p_n=1$ and can be set to $p_n=\frac{|\mathcal{D}_n|}{\sum_{i=1}^{N}|\mathcal{D}_{i}|}$.

As a primitive implementation and the most commonly studied FL algorithm, \textit{FedAvg} \cite{mcmahan2017communication} has been proposed to solve the problem in Eq. (\ref{eq1}).
Specifically, the optimization process consists of multiple communication rounds. At each round $k\in\left\{0,1,...,K-1\right\}$, the PS randomly selects $M$ clients and sends the global model $\mathbf{x}^k$ to the set of selected clients $\mathcal{M}^k\subseteq\mathcal{N}$.
By setting $\mathbf{x}^{k,0}_n=\mathbf{x}^k$, each client $n$ independently trains the local model for $H$ iterations:
\vspace{-0.8mm}
\begin{equation}\small
	\mathbf{x}^{k,j+1}_n = \mathbf{x}^{k,j}_n - \eta_k\nabla F_n(\mathbf{x}^{k,j}_n;\xi^{k,j}_n),\ j = 0,1,...,H-1,
	\vspace{-1mm}
\end{equation}
where $\eta_k$ is the learning rate at round $k$, and $\xi^{k,j}_n$ is the sample selected by client $n$ for local iteration $j$.
After local training, each client $n$ sends the model updates $\mathbf{G}^k_n=\sum_{j=0}^{H-1}\nabla F_n(\mathbf{x}^{k,j}_n;\xi^{k,j}_n)$ to the PS for global aggregation.
However, unlike in a cloud data center, FedAvg might face a few fundamental challenges while training models on edge devices, such as limited communication resources, dynamic network conditions and heterogeneous client properties.

\subsection{Heterogeneity-Aware Federated Learning Framework}
To address these challenges, we propose a heterogeneity-aware FL framework, called FedCG. As shown in Alg. \ref{training_process}, the training process of our framework includes $K$ rounds, and each round consists of the following phases.
\vspace{-0.9mm}
\begin{itemize}
	\setlength{\itemsep}{0pt}
	\setlength{\parsep}{0pt}
	\setlength{\parskip}{0pt}
	\item At the beginning of round $k$, FedCG adaptively selects a diverse subset of clients $\mathcal{M}^k$ considering statistical heterogeneity and determines different compression ratios for selected clients according to heterogeneous and time-varying capabilities.
	Then the PS sends the global model $\mathbf{x}^k$ and compression ratio $\theta_n^k$ to each client $n\in\mathcal{M}^k$.
	
	\item Each client $n$ updates the received model over its local dataset for $H$ iterations. Based on compression ratio $\theta_n^k$, the client $n$ compresses the original model updates $\mathbf{G}^k_n$ to obtain $\mathbf{\tilde{G}}^k_n$.
	Then the compressed model updates $\mathbf{\tilde{G}}^k_n$ that fit the capabilities of client $n$ are uploaded to the PS.
	
	\item Upon receiving all model updates from selected subset, the PS obtains a new global model $\mathbf{x}^{k+1}$  by aggregating compressed model updates and starts the next round.
\end{itemize}
\vspace{-1mm}
Next, we detail the two main innovations of our framework, \ie, client selection and gradient compression.

\begin{algorithm}[!t]
	\setstretch{1.1}
	\caption{Training process of FedCG}
	\label{training_process}
	\For{Each round $k= 0,1,...,K-1$}
	{
		The PS selects a diverse subset of clients $\mathcal{M}^k$ with $\left|\mathcal{M}^k\right| =M$\;
		The PS determines differnet compression raio $\theta_n^k$ for each client $n\in\mathcal{M}^k$\;
		The PS sends the current global model $\mathbf{x}^k$ and compression raio $\theta_n^k$ to each client $n\in\mathcal{M}^k$\;
		\For{Each client $n \in \mathcal{M}^k$ in parallel}
		{
			$\mathbf{x}^{k,0}_n=\mathbf{x}^k$\;
			\For{Each local iteration $j= 0,1,...,H-1$}
			{
				$\mathbf{x}^{k,j+1}_n = \mathbf{x}^{k,j}_n - \eta_k\nabla F_n(\mathbf{x}^{k,j}_n;\xi^{k,j}_n)$\;
			}
			Compress the model updates $\mathbf{G}^k_n$ to obtain $\mathbf{\tilde{G}}^k_n$ according to compression raio $\theta^k_n$\;
			Upload $\mathbf{\tilde{G}}^k_n$ to the PS\;
		}
		The PS updates the global model $\mathbf{x}^{k+1} = \mathbf{x}^k - \frac{\eta_k}{M}\sum_{n\in\mathcal{M}^k}\mathbf{\tilde{G}}^k_n$\;
	}
\end{algorithm}

\textbf{(1) Client Selection.}
Considering limited communication bandwidth and client availability, we select a fraction of clients to participate in training, which effectively reduces communication overhead.
However, the clients located in geographically distinct regions generate data from different distributions (\ie, non-IID) in practice.
Many clients may provide similar and redundant gradient information for aggregation, which cannot reflect the true data distribution in the global view.
Selecting such clients will waste resources and cause the global model to be biased towards certain clients, thus exacerbating the negative impact of non-IID data on training performance.
To this end, we introduce diversity to client selection and select representative clients out of the whole while adhering to resource constraints \cite{balakrishnan2022diverse}.
We expect to find a diverse subset of clients $\mathcal{M}^k$ whose aggregated model updates approximate the (logically) aggregated updates of all clients.
By encouraging diversity in model updates, we reduce redundant communication and increase the impact of under-represented clients that contribute different information, thereby promoting fairness.
In this way, FedCG will counterbalance the bias introduced by non-IID data and speed up convergence.

\textbf{(2) Gradient Compression.}
Gradient compression is another commonly adopted solution to alleviate network pressure due to its practicality and substantial bandwidth efficiency.
Among previous gradient compression techniques, Top-k is a promising compression operator with empirical and theoretical studies \cite{DBLP:conf/iclr/LinHM0D18,shi2019convergence}.
Only the gradients with larger absolute values are required to transmit for global aggregation, which can sparsify the local gradients to only 0.1\% density without impairing model convergence or accuracy \cite{shi2019convergence}.
Therefore, we adopt Top-k sparsification in this paper due to its efficiency and simplicity.
It is worth mentioning that other compression operators (\eg, Random-k sparsification \cite{stich2018sparsified} and quantization \cite{alistarh2017qsgd}) can also be compatible with our framework.
Based on compression ratio $\theta_n^k$, the client $n$ selects the gradient elements with larger absolute values from the original model updates $\mathbf{G}^k_n$, and zero-out other unselected gradient elements to obtain compressed model updates $\mathbf{\tilde{G}}^k_n$.
Moreover, we apply the error compensation mechanism \cite{li2021talk} onto FedCG, which is widely used along with compression to further improve training performance.
Error compensation accumulates the error from only uploading compressed gradients, thereby ensuring that all elements of the full gradient have a chance to be aggregated.

The compression ratio $\theta_n^k$ can be regarded as a measure of the sparsity, where a smaller $\theta_n^k$ corresponds to a more sparse vector and requires less communication and vice versa.
More importantly, unlike the existing works unifying the sparsity levels of clients, FedCG assigns \textit{different} compression ratios to the selected clients $\mathcal{M}^k$ considering their heterogeneous and time-varying capabilities.
Specifically, the clients with excellent capabilities (\ie, short completion time) are expected to adopt slight gradient compression, while the others with poor capabilities (\ie, long completion time) should compress the gradients more aggressively. As a result, the selected clients will achieve approximately identical per-round completion time.
Adaptive gradient sparsification dramatically saves the communication cost and mitigates the impact of stragglers, which provides substantial benefits in improving training efficiency, particularly in the resource-limited and heterogeneous wireless environments envisioned for FL.

\subsection{Problem Formulation}
We define the joint optimization problem of client selection and compression ratio decision as below.
Let $T_{n,cmp}^k$ denote the computation time required for client $n$ to perform one local iteration.
At round $k$, the local training time of client $n$ is $HT_{n,cmp}^k$ \cite{luo2021cost}, where $H$ is the number of local iterations between two consecutive global synchronizations.
Following the prior works \cite{DBLP:conf/iclr/LinHM0D18}, we only consider the uplink communication, since the downlink speed in FL is much faster compared with the uplink and the parameter download time is negligible \cite{zhan2020incentive}.
The communication time of client $n$ at round $k$ can be formulated as:
\vspace{-1mm}
\begin{equation}\small
	T_{n,com}^k=\frac{\theta_n^kR}{C_n^k},
	\vspace{-1mm}
\end{equation}
where $R$ represents the size of original (uncompressed) model updates and $C_n^k$ represents the upload speed of client $n$ at round $k$.
The upload speed changes dynamically as the training progresses.
For client $n$, the total time $T_n^k$ of local training and transmitting parameters at round $k$ is expressed as:
\vspace{-1mm}
\begin{equation}\small
	T_n^k=HT_{n,cmp}^k+T_{n,com}^k.
	\vspace{-1mm}
\end{equation}
In the synchronous FL, the per-round time is determined by the “slowest” one among the selected clients. The completion time of round $k$ is defined as:
\vspace{-1mm}
\begin{equation}\small
	T^k=\underset{n\in\mathcal{M}^k}{\max}\ T_n^k.
	\vspace{-1mm}
\end{equation}

We aim to select the clients involved in FL and determine the compression ratios for those selected clients.
The optimization problem can be formulated as:
\vspace{-1mm}
\begin{equation}\small
	\min\ F(\mathbf{x}^{K})\nonumber
\end{equation}
\vspace{-4mm}
\begin{equation}\label{equ:mini}\small
	s.t.\\
	\begin{cases}
		\sum_{k=0}^{K-1}T^k<T  \\
		\left|\mathcal{M}^k\right| =M, &\forall k \\
		0<\theta_n^k\leq1, &\forall n, \forall k \\
	\end{cases}
\end{equation} 
The first inequality guarantees the resource constraint where $T$ denotes the total time budget for given $K$.
The second set of inequalities indicates that the PS selects $M$ clients participating in training at each round.
The third set of inequalities bounds the feasible range of compression ratios.
The objective of the optimization problem is to minimize the loss function $F(\mathbf{x}^{K})$ of model training given the resource constraint.

It is worth noting that our formulation can be extended to other “costs” beyond the completion time (\eg, energy consumption) as well.
In fact, it is non-trivial to directly solve the problem in Eq. (\ref{equ:mini}) due to the unclear convergence relationship, accuracy-overhead trade-off, and tightly coupled nature.
In the following section, we derive a tractable convergence rate to indicate how the selected clients and compression ratios affect the final convergence. On this basis, we develop a joint optimization algorithm to solve the coupled problem.

\section{Convergence Analysis}\label{sec:convergence}
In this section, we provide the convergence analysis of the proposed framework. We first state the following assumptions on the local loss functions.

\begin{assumption}\label{assum1}
	\vspace{-2mm}
	$F_1,F_2,...,F_N$ are all $L$-smooth, \ie, given $\mathbf{x}$ and $\mathbf{y}$, $F_n(\mathbf{x})\le F_n(\mathbf{y})+(\mathbf{x}-\mathbf{y})^T\nabla F_n(\mathbf{y})+\frac{L}{2} \left \| \mathbf{x}-\mathbf{y} \right \|^2$.\\
	\vspace{-3mm}
\end{assumption}
\vspace{-4mm}
\begin{assumption}\label{assum2}
	$F_1,F_2,...,F_N$ are all $\mu$-strongly convex, \ie, given $\mathbf{x}$ and $\mathbf{y}$, $F_n(\mathbf{x})\ge F_n(\mathbf{y})+(\mathbf{x}-\mathbf{y})^T\nabla F_n(\mathbf{y})+\frac{\mu}{2} \left \| \mathbf{x}-\mathbf{y} \right \|^2$.
\end{assumption}
\vspace{-4mm}
\begin{assumption}\label{assum3}
	The variance of the stochastic gradients on random data samples is bounded, \ie, $\mathbb{E} [  \| \nabla F_n(\mathbf{x}^{k,j}_n;\xi^{k,j}_n)-\nabla F_n(\mathbf{x}^{k,j}_n) \|^2 ] \le \sigma^2 , \forall n, \forall j, \forall k$.
\end{assumption}
\vspace{-4mm}
\begin{assumption}\label{assum4}
	The stochastic gradients on random data samples are uniformly bounded, \ie, $\| \nabla F_n(\mathbf{x}^{k,j}_n;\xi^{k,j}_n)\|^2 \le G^2 , \forall n, \forall j, \forall k $.
\end{assumption}
\vspace{-2mm}
These assumptions hold for typical FL models and are common in the convergence analysis literature \cite{balakrishnan2022diverse,DBLP:conf/iclr/LiHYWZ20,cui2021optimal,luo2021tackling,wang2022accelerating}. Although our convergence analysis focuses on strong convex problems, the experimental results demonstrate that proposed framework also works well for non-convex learning problems. Furthermore, we use $\Gamma =F^{*}- \sum_{n=1}^{N}p_nF_n^{*} $ to quantize the degree of non-IID data distribution on clients, where $F^{*}$ and $F_n^{*}$ are the optimal values of $F$ and $F_n$, respectively.
If the data across clients follow IID, then $\Gamma$ obviously goes to zero as the number of samples grows. 
Inspired by \cite{DBLP:conf/iclr/LiHYWZ20}, we flatten local iterations at each communication round and use $\mathbf{y}^{k,j+1}_n = \mathbf{x}^{k,j}_n - \eta_k\nabla F_n(\mathbf{x}^{k,j}_n;\xi^{k,j}_n)$ to represent the result of a local iteration on client $n$. If $j + 1 < H$, $\mathbf{x}^{k,j+1}_n = \mathbf{y}^{k,j+1}_n$; otherwise, $\mathbf{x}_n^{k+1,0} = \mathbf{x}_n^{k,0} - \frac{\eta_k}{M}\sum_{n\in\mathcal{M}^k}\mathbf{\tilde{G}}^k_n$ and $\mathbf{y}_n^{k+1,0} = \mathbf{y}_n^{k,H}$. Moreover, we define $\bar{\mathbf{x}}^{k,j} =\sum_{n=1}^{N} p_n\mathbf{x}^{k,j}_n$, $\bar{\mathbf{y}}^{k,j} =\sum_{n=1}^{N} p_n\mathbf{y}^{k,j}_n$, $\bar{\mathbf{x}}^{k}=\bar{\mathbf{x}}^{k,0}$ and $\bar{\mathbf{y}}^{k}=\bar{\mathbf{y}}^{k,0}$.

At round $k$, we need to find a subset $\mathcal{M}^k$ of clients whose aggregated gradients can approximate the full gradients over all the $N$ clients. We assume that there is a mapping $\pi^k: \mathcal{N}\to\mathcal{M}^k$ such that the gradients from client $n\in\mathcal{N}$ can be approximated by the gradients from a selected client $\pi^k(n)\in\mathcal{M}^k$.
Let $\mathcal{A}_n^k=\{i\in\mathcal{N}|\pi^k(i)=n\}$ be the set of clients approximated by client $n\in\mathcal{M}^k$ and $\gamma_n^k=|\mathcal{A}_n^k|$.
Then we define the approximation error at round $k$ as:
\begin{equation}\small
\alpha_k = \left \| \frac{1}{N}\sum_{n\in\mathcal{M}^k}\gamma_n^k\nabla F_n(\mathbf{y}_n^{k,0})-\frac{1}{N}\sum_{n\in\mathcal{N}}\nabla F_n(\mathbf{y}_n^{k,0})  \right \| ,
\end{equation} 
which is used to characterize how well the aggregated gradients of selected client subset $\mathcal{M}^k$ approximate the full gradients.
Besides, we define the compression error as:
\vspace{-1mm}
\begin{equation}\small
	\beta_n^k = \|\mathbf{\tilde{G}}^k_n-\mathbf{G}^k_n \|,
	\vspace{-1mm}
\end{equation}
which indicates the difference between the compressed gradients $\mathbf{\tilde{G}}^k_n$ and the original gradients $\mathbf{G}^k_n$ of client $n$ at round $k$.
The approximation error and compression error are related to client selection and compression ratio decision policies. Their impact on final accuracy will be quantified in the next theorem.

\vspace{-1mm}
\begin{lemma}\label{lemma1}
	Under Assumptions \ref{assum1}-\ref{assum4}, the proposed framework ensures
	
	\vspace{-4mm}
	\begin{small}
		\begin{align*}
			\|\bar{\mathbf{x}}^{k+1}-\bar{\mathbf{y}}^{k+1}\|\le LGH^2\eta_k^2 + \alpha_k H\eta_k+\frac{\eta_k}{M}\sum_{n\in\mathcal{M}^k}\beta_n^k.
		\end{align*}
	\end{small}
\end{lemma}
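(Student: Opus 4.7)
The plan is to write both $\bar{\mathbf{x}}^{k+1}$ and $\bar{\mathbf{y}}^{k+1}$ in closed form after a single round, subtract them, and split the resulting gap into three pieces matching the three terms on the right-hand side.

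For the closed forms, using the PS aggregation in Alg.~\ref{training_process}, $\sum_n p_n=1$, and $\mathbf{x}_n^{k,0}=\bar{\mathbf{x}}^k$, one obtains $\bar{\mathbf{x}}^{k+1}=\bar{\mathbf{x}}^k-\frac{\eta_k}{M}\sum_{n\in\mathcal{M}^k}\tilde{\mathbf{G}}_n^k$. Unrolling $\mathbf{y}^{k,j+1}_n=\mathbf{x}^{k,j}_n-\eta_k\nabla F_n(\mathbf{x}^{k,j}_n;\xi^{k,j}_n)$ with $\mathbf{x}^{k,j+1}_n=\mathbf{y}^{k,j+1}_n$ for $j+1<H$ and the reset $\mathbf{y}^{k+1,0}_n=\mathbf{y}^{k,H}_n$ yields $\bar{\mathbf{y}}^{k+1}=\bar{\mathbf{x}}^k-\eta_k\sum_{n=1}^{N}p_n\mathbf{G}_n^k$. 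Inserting $\pm\tfrac{1}{M}\sum_{n\in\mathcal{M}^k}\mathbf{G}_n^k$ into the difference gives
\begin{equation*}
\bar{\mathbf{x}}^{k+1}-\bar{\mathbf{y}}^{k+1}=\eta_k\Big[\sum_{n=1}^{N}p_n\mathbf{G}_n^k-\tfrac{1}{M}\sum_{n\in\mathcal{M}^k}\mathbf{G}_n^k\Big]+\tfrac{\eta_k}{M}\sum_{n\in\mathcal{M}^k}\big(\mathbf{G}_n^k-\tilde{\mathbf{G}}_n^k\big),
\end{equation*}
and the triangle inequality with the definition of $\beta_n^k$ immediately handles the second bracket, producing the third term of the lemma.

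For the first bracket I would expand $\mathbf{G}_n^k=\sum_{j=0}^{H-1}\nabla F_n(\mathbf{x}^{k,j}_n;\xi^{k,j}_n)$ and insert $\pm\nabla F_n(\mathbf{y}^{k,0}_n)$ into every summand, splitting it into a $j$-independent ``reference'' part and a ``drift'' part. The reference part picks up a factor of $H$ from the inner sum; identifying the weighting via $p_n=1/N$ and $\gamma_n^k/N=1/M$ (the balanced clustering implicit in the mapping $\pi^k$) makes it exactly $H\alpha_k$ by the definition of $\alpha_k$, contributing $H\alpha_k\eta_k$. For the drift part, Assumption~\ref{assum4} yields $\|\mathbf{x}_n^{k,j}-\mathbf{x}_n^{k,0}\|\le jG\eta_k$ by telescoping the local SGD updates, and then Assumption~\ref{assum1} gives $\|\nabla F_n(\mathbf{x}_n^{k,j};\xi_n^{k,j})-\nabla F_n(\mathbf{y}_n^{k,0})\|\le LjG\eta_k$ after the usual conceptual identification of $\mathbf{y}_n^{k,0}$ with $\mathbf{x}_n^{k,0}$; summing over $j=0,\dots,H-1$ produces at most $\tfrac{1}{2}LGH^2\eta_k$, which together with the outer $\eta_k$ and $\tfrac{1}{2}\le 1$ delivers the first term $LGH^2\eta_k^2$. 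A final triangle inequality over the three pieces closes the argument.

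The main obstacle is the alignment of reference points and averaging weights: $\alpha_k$ is evaluated at $\mathbf{y}_n^{k,0}$ while the local trajectory issues from $\mathbf{x}_n^{k,0}=\bar{\mathbf{x}}^k$, which coincide only conceptually after the first round, and the PS's uniform $1/M$-average has to be reconciled with the $p_n$-weighted and $\gamma_n^k/N$-weighted averages used in the definitions of $F$ and $\alpha_k$. Both subtleties are absorbed into the drift bound and the $\alpha_k$ identification, and the looseness they introduce is exactly what lets the missing $\tfrac{1}{2}$ in front of $LGH^2\eta_k^2$ be swept into the stated constant.
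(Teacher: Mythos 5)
Your proposal is correct and follows essentially the same route as the paper's own proof: the same closed forms for $\bar{\mathbf{x}}^{k+1}$ and $\bar{\mathbf{y}}^{k+1}$, the same triangle-inequality split into a compression piece (giving $\frac{\eta_k}{M}\sum_{n\in\mathcal{M}^k}\beta_n^k$), an approximation piece anchored at $\mathbf{y}_n^{k,0}$ (giving $\alpha_k H\eta_k$), and a smoothness-plus-bounded-gradient drift piece (giving $LGH^2\eta_k^2$, where the paper sums $2LGj\eta_k^2$ to $LGH(H-1)\eta_k^2$ and absorbs it the same way you absorb your $\tfrac12$). You also correctly surface the implicit identifications ($p_n=1/N$, $\gamma_n^k/N=1/M$, $\mathbf{y}_n^{k,0}$ versus $\mathbf{x}_n^{k,0}$, and stochastic versus full gradients) that the paper glosses over with its "slight abuse of notation."
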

\vspace{-4mm}
\begin{proof}Since the clients only upload the model updates to the PS every $H$ local iterations, the same client subset $\mathcal{M}^k$ is adopted to approximate the full gradients at each iteration $j\in[0,H-1]$. 
With a slight abuse of notation, we have
	\begin{align}
		&\|\bar{\mathbf{x}}^{k+1}-\bar{\mathbf{y}}^{k+1}\| \nonumber\\
		=& \|(\bar{\mathbf{x}}^{k}- \frac{\eta_k}{M}\sum_{n\in\mathcal{M}^k}\mathbf{\tilde{G}}^k_n)-(\bar{\mathbf{x}}^{k}- \frac{\eta_k}{N}\sum_{n\in\mathcal{N}}\mathbf{G}^k_n)\|\nonumber\\
		\overset{(a)}{\le}&\sum_{j=0}^{H-1}\eta_k\|\frac{1}{N}\sum_{n\in\mathcal{M}^k}\gamma_n^k\nabla F_n(\mathbf{y}^{k,j}_n)- \frac{1}{N}\sum_{n\in\mathcal{N}}\nabla F_n(\mathbf{y}^{k,j}_n)\| \nonumber\\
		&+\frac{\eta_k}{M}\sum_{n\in\mathcal{M}^k}\|\mathbf{\tilde{G}}^k_n-\mathbf{G}^k_n \| \nonumber\\
		\overset{(b)}{\le}&\sum_{j=0}^{H-1}2LGj\eta_k^2 + \alpha_k H\eta_k + \frac{\eta_k}{M}\sum_{n\in\mathcal{M}^k}\|\mathbf{\tilde{G}}^k_n-\mathbf{G}^k_n \| \nonumber\\
		\le&LGH(H-1)\eta_k^2 + \alpha_k H\eta_k + \frac{\eta_k}{M}\sum_{n\in\mathcal{M}^k}\|\mathbf{\tilde{G}}^k_n-\mathbf{G}^k_n \| \nonumber\\
		\le&LGH^2\eta_k^2 + \alpha_k H\eta_k + \frac{\eta_k}{M}\sum_{n\in\mathcal{M}^k}\beta_n^k \nonumber
	\end{align}
where (a) follows from triangular inequality and (b) follows from the definition of approximation error.
\end{proof}

\begin{theorem}\label{theorem1}
Let Assumptions \ref{assum1}-\ref{assum4} and Lemma \ref{lemma1} hold, and $\mathbf{x}^{*}$ is the optimal model. We have the convergence rate:
	
	\vspace{-4mm}
	\begin{small}
		\begin{align*}
			\mathbb{E} [  \| \mathbf{x}^K-\mathbf{x}^{*} \|^2 ]  \le \mathcal{O}(\frac{1}{K})+\mathcal{O}(\alpha)+\mathcal{O}(\beta),
		\end{align*}
	\end{small}where $\alpha=\max\limits_{k}\{\alpha_k\}$ and $\beta=\max\limits_{n,k}\{\beta_n^k\}$.
\end{theorem}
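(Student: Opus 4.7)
The plan is to build on the perturbed iterate analysis of \cite{DBLP:conf/iclr/LiHYWZ20}, using the auxiliary sequence $\{\bar{\mathbf{y}}^k\}$ as a bridge between the true iterate $\{\bar{\mathbf{x}}^k\}$ and the optimum $\mathbf{x}^*$. Concretely, $\bar{\mathbf{y}}^{k+1}$ represents what the averaged iterate would be if every client (not just $\mathcal{M}^k$) contributed its uncompressed model update; Lemma \ref{lemma1} already quantifies how far $\bar{\mathbf{x}}^{k+1}$ drifts from this idealized update due to client selection ($\alpha_k$) and gradient compression ($\beta_n^k$). The overall strategy is therefore: (i) obtain a standard one-step contraction for $\mathbb{E}\|\bar{\mathbf{y}}^{k+1}-\mathbf{x}^*\|^2$, (ii) transfer it to $\bar{\mathbf{x}}^{k+1}$ via the triangle inequality and Lemma \ref{lemma1}, and (iii) unroll the resulting recursion with a decaying learning rate.

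First, using Assumptions \ref{assum1}--\ref{assum4}, I would derive the textbook FedAvg one-step descent lemma for the virtual sequence,
\begin{equation*}
\mathbb{E}\|\bar{\mathbf{y}}^{k+1}-\mathbf{x}^*\|^2 \le (1-\mu\eta_k)\,\mathbb{E}\|\bar{\mathbf{y}}^{k}-\mathbf{x}^*\|^2 + \eta_k^2\,B,
\end{equation*}
where $B$ bundles together the stochastic-gradient variance $\sigma^2$, the local-drift contribution $(H-1)^2 G^2$ from the $H$ inner SGD steps, and the non-IID penalty $\Gamma$. This step is essentially the analysis of \cite{DBLP:conf/iclr/LiHYWZ20} applied to our virtual sequence and uses $L$-smoothness and $\mu$-strong convexity in the standard way.

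Second, I would combine this with Lemma \ref{lemma1}. Applying the relaxed triangle inequality $\|a+b\|^2\le(1+\rho)\|a\|^2+(1+\rho^{-1})\|b\|^2$ with $\rho=\mu\eta_k/2$ to the split $\bar{\mathbf{x}}^{k+1}-\mathbf{x}^* = (\bar{\mathbf{y}}^{k+1}-\mathbf{x}^*)+(\bar{\mathbf{x}}^{k+1}-\bar{\mathbf{y}}^{k+1})$, and bounding the second term via Lemma \ref{lemma1} together with $\alpha_k\le\alpha$ and $\beta_n^k\le\beta$, I obtain
\begin{equation*}
\mathbb{E}\|\bar{\mathbf{x}}^{k+1}-\mathbf{x}^*\|^2 \le (1-\tfrac{\mu\eta_k}{2})\,\mathbb{E}\|\bar{\mathbf{x}}^{k}-\mathbf{x}^*\|^2 + \eta_k^2 B' + \tfrac{C_1}{\mu\eta_k}\bigl(L^2G^2H^4\eta_k^4 + H^2\eta_k^2\alpha^2 + \eta_k^2\beta^2\bigr),
\end{equation*}
after absorbing the $\bar{\mathbf{y}}$-to-$\bar{\mathbf{x}}$ conversion error on the first term using another application of Lemma \ref{lemma1} at step $k$.

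Third, I would choose the diminishing step size $\eta_k = \frac{2}{\mu(k+k_0)}$ used in \cite{DBLP:conf/iclr/LiHYWZ20} and prove by induction that $\mathbb{E}\|\bar{\mathbf{x}}^{K}-\mathbf{x}^*\|^2 \le \frac{v}{K+k_0} + \frac{C_2 H^2 \alpha^2}{\mu^2} + \frac{C_3 \beta^2}{\mu^2}$ for an appropriate constant $v$. The $\mathcal{O}(1/K)$ term comes from the standard telescoping of the $\eta_k^2 B'$ noise; the $\alpha^2$ and $\beta^2$ terms do not vanish because they are multiplied by $\eta_k^2$ but divided by $\eta_k$ (after using the $1/(\mu\eta_k)$ factor from the relaxed triangle inequality), leaving an $\eta_k$-independent residual. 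Since $\alpha,\beta\le 1$ imply $\alpha^2\le\alpha$ and $\beta^2\le\beta$, the final bound is of the claimed form $\mathcal{O}(1/K)+\mathcal{O}(\alpha)+\mathcal{O}(\beta)$.

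The main obstacle is the bookkeeping in step two: Lemma \ref{lemma1} bounds a norm, not a squared norm, so the terms enter the recursion with an awkward factor of $\eta_k$ instead of $\eta_k^2$, and the weighting $(1+\rho^{-1})$ from the relaxed triangle inequality must be chosen so that the selection/compression contributions neither blow up nor are killed by the contraction factor. Getting these constants to line up so that the residual after unrolling is exactly $\mathcal{O}(\alpha)+\mathcal{O}(\beta)$ (rather than, say, $\mathcal{O}(\alpha/\eta_K)$) is the delicate part; everything else follows the standard non-IID FedAvg template.
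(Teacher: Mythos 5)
Your proposal is essentially sound and follows the same overall skeleton as the paper -- virtual sequence $\bar{\mathbf{y}}^k$, the one-step descent bound $\mathbb{E}\|\bar{\mathbf{y}}^{k,j+1}-\mathbf{x}^*\|^2\le(1-\eta_k\mu)\mathbb{E}\|\bar{\mathbf{x}}^{k,j}-\mathbf{x}^*\|^2+\eta_k^2B$ from the FedAvg analysis, Lemma~\ref{lemma1} to control the drift induced by selection and compression, and an unrolled recursion under $\eta_k=\mathcal{O}(1/k)$ -- but it diverges in the one step that actually determines the form of the residual. The paper expands $\|\bar{\mathbf{x}}^{k+1}-\mathbf{x}^*\|^2$ exactly, keeps the cross term $2\mathbb{E}\langle\bar{\mathbf{x}}^{k+1}-\bar{\mathbf{y}}^{k+1},\bar{\mathbf{y}}^{k+1}-\mathbf{x}^*\rangle$, and bounds it by Cauchy--Schwarz together with an assumed uniform bound $\mathbb{E}\|\bar{\mathbf{y}}^{k+1}-\mathbf{x}^*\|\le\rho$; this yields a per-round perturbation $2\eta_k\rho(\alpha H+\beta)$ that is \emph{linear} in $\alpha,\beta$ and unrolls directly to $\mathcal{O}(\alpha)+\mathcal{O}(\beta)$. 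You instead apply the relaxed triangle inequality with weight $\mu\eta_k/2$, which avoids the bounded-iterates constant $\rho$ entirely (a real advantage: your argument is self-contained where the paper must import a boundedness claim from an external reference) and produces a \emph{quadratic} residual $\mathcal{O}(\alpha^2)+\mathcal{O}(\beta^2)$, which is strictly tighter when the errors are small. The only genuine slip is your final relaxation: $\alpha,\beta\le 1$ is not justified by anything in the setup ($\beta_n^k$ can be as large as $HG$ under Assumption~\ref{assum4}, and $\alpha_k$ as large as $2G$). The conclusion still stands, but for a different reason: Assumption~\ref{assum4} bounds $\alpha$ and $\beta$ by problem-dependent constants, so $\alpha^2\le 2G\,\alpha=\mathcal{O}(\alpha)$ and $\beta^2\le HG\,\beta=\mathcal{O}(\beta)$; you should state it this way. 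With that correction, your route is valid and arguably cleaner than the paper's, at the cost of the extra bookkeeping you already flagged around the $(1+\rho^{-1})$ weighting.
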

\begin{proof}We analyze the interval between $\bar{\mathbf{x}}^{k+1}$ and $\mathbf{x}^{*}$:
	\begin{align}
		\mathbb{E} [  \| \bar{\mathbf{x}}^{k+1}-\mathbf{x}^{*} \|^2]
		=&\mathbb{E} [  \| \bar{\mathbf{x}}^{k+1}-\bar{\mathbf{y}}^{k+1}+\bar{\mathbf{y}}^{k+1}-\mathbf{x}^{*} \|^2] \nonumber\\
		=&\mathbb{E} [  \| \bar{\mathbf{x}}^{k+1}-\bar{\mathbf{y}}^{k+1}\|^2]+\mathbb{E} [  \|\bar{\mathbf{y}}^{k+1}-\mathbf{x}^{*} \|^2]\nonumber\\
		&+2\mathbb{E} [  \langle \bar{\mathbf{x}}^{k+1}-\bar{\mathbf{y}}^{k+1},\bar{\mathbf{y}}^{k+1}-\mathbf{x}^{*} \rangle] \label{eq50}
	\end{align}
	We know that $\bar{\mathbf{y}}^{k,j+1}$ satisfie the following properties \cite{DBLP:conf/iclr/LiHYWZ20}:
	\begin{align}\hspace{-1mm} \mathbb{E} [  \| \bar{\mathbf{y}}^{k,j+1}-\mathbf{x}^{*} \|^2 ]  \le (1-\eta_k\mu)\mathbb{E} [  \| \bar{\mathbf{x}}^{k,j}-\mathbf{x}^{*} \|^2 ]+\eta_k^2B \label{eq51}
	\end{align} 
	where $B=\sum_{n=1}^{N}p_n^2\sigma^2+6L\Gamma+8(H-1)^2G^2$.
	Substitute Eq. (\ref{eq51}) into Eq. (\ref{eq50}).
	For a decaying learning rate $\eta_k=\frac{\lambda}{k+\tau}$ where $\lambda>\frac{1}{\mu}$ and $\tau>0$ \cite{cui2021optimal}, we have
	\begin{align}
		&\mathbb{E} [  \| \bar{\mathbf{x}}^{k+1}-\mathbf{x}^{*} \|^2] \nonumber\\
		=&\mathbb{E} [  \| \bar{\mathbf{x}}^{k+1}-\bar{\mathbf{y}}^{k+1}\|^2]+(1-\eta_k\mu)\mathbb{E} [  \|  \bar{\mathbf{x}}^{k,H-1}-\mathbf{x}^{*} \|^2 ]  \nonumber+\eta_k^2B\nonumber\\
		&+2\mathbb{E} [  \langle \bar{\mathbf{x}}^{k+1}-\bar{\mathbf{y}}^{k+1},\bar{\mathbf{y}}^{k+1}-\mathbf{x}^{*} \rangle] \nonumber\\
		\overset{(a)}{\le}&(LGH^2\eta_k^2 + \alpha H\eta_k+\beta\eta_k)^2 +(1-\eta_k\mu)\mathbb{E} [  \|  \bar{\mathbf{x}}^{k,H-1}-\mathbf{x}^{*} \|^2 ]  \nonumber\\
		&+\eta_k^2B+ 2(LGH^2\eta_k^2 + \alpha H\eta_k+\beta\eta_k)\mathbb{E} [  \|\bar{\mathbf{y}}^{k+1}-\mathbf{x}^{*} \|]  \nonumber\\
		\le&(LGH^2\eta_k^2 + \alpha H\eta_k+\beta\eta_k)^2 +(1-\eta_k\mu)\mathbb{E} [  \|  \bar{\mathbf{x}}^{k}-\mathbf{x}^{*} \|^2 ]  \nonumber\\
		&+\eta_k^2HB+ 2(LGH^2\eta_k^2 + \alpha H\eta_k+\beta\eta_k)\mathbb{E} [  \|\bar{\mathbf{y}}^{k+1}-\mathbf{x}^{*} \|]  \nonumber\\
		\overset{(b)}{\le}&(1-\eta_k\mu)\mathbb{E} [  \|  \bar{\mathbf{x}}^{k}-\mathbf{x}^{*} \|^2 ]+ 2\eta_k\rho(\alpha H+\beta) \nonumber\\
		&+\eta_k^2((LGH^2\eta_k + \alpha H+\beta)^2+HB+2LGH^2\rho) \label{eq10}
	\end{align}
	where (a) follows from Lemma \ref{lemma1} and (b) follows since $\mathbb{E} [  \|\bar{\mathbf{y}}^{k+1}-\mathbf{x}^{*} \|]$ can be bounded by a constant $\rho$ \cite{balakrishnan2022diverse}.
	Based on Eq. (\ref{eq10}), the final convergence rate follows from \cite{mirzasoleiman2020coresets}.
\end{proof}

\textbf{Remark:} Theorem \ref{theorem1} reveals that the approximation error and compression error have a great impact on the convergence performance.
Ideally, when we select all clients to participate in training (\ie, $M=N$) and set the compression ratios of all clients as 1 (\ie, without compression) at each round, the approximate error and compression error become 0.
To maximize the final model accuracy for total $K$ rounds, we should minimize the approximation error and compression error under resource constraints.

\section{Algorithm Design}\label{sec:algorithm}
In this section, we show how to leverage the derived convergence rate in Theorem 1 to obtain client selection and gradient compression policies for heterogeneous FL systems, which is the crucial design in FedCG.
We first introduce the overall joint optimization process (Section \ref{subsec:joint_optimization}) and then detail two core components of the proposed algorithm, \ie, client selection strategy (Section \ref{subsec:client_selection}) and compression ratio decision strategy (Section \ref{subsec:ratio_decision}), which are designed by minimizing the approximation error and compression error, respectively.

\subsection{Joint Optimization Process}\label{subsec:joint_optimization}
The key insight behind FedCG is that client selection and compression ratio decision interact with each other.
We cannot reach the optimal state by independently determining the client subset and compression ratios.
Therefore, this coupled property raises the necessity for joint optimization.
However, it is usually difficult to optimize both at the same time.
While if we fix one decision and then optimize the other, both of which are greatly simplified.
To this end, we propose an iteration-based algorithm to jointly optimize client selection and compression ratio decision for the tightly coupled problem.

As shown in Alg. \ref{joint_algorithm}, in each iteration, we first apply submodular maximization in Section \ref{subsec:client_selection} to select a diverse subset from candidate clients, thereby minimizing the approximation error (Line 4). The aggregated gradients of selected clients are a good approximation of the full gradients from all clients.
Then we determine appropriate compression ratios for these clients by solving the optimization problem (\ref{equ:ratio}) in Section \ref{subsec:ratio_decision} (Line 5).
If the derived solution contributes to the reduction of compression error, we update the current client subset and compression ratios (Lines 6-8).
Then, we find the client with the smallest compression ratio in the subset, and remove it from the candidate clients of the next iteration (Lines 9-10).
This design prevents the client with over-compressed gradients from participating in FL and affecting model accuracy.
The iterative heuristic terminates after $M$ iterations. We finally obtain the client subset $\mathcal{M}^{k}$ at round $k$ and the corresponding compression ratios $\theta_n^{k}$, $\forall n\in\mathcal{M}^{k}$ (Line 11).

Considering the coupled nature of the optimization problem, we apply an iteration-based algorithm to derive an efficient solution for client selection and compression ratio decision.
Starting with the initialization, the client subset and compression ratios of each round are optimized alternately via fixed-point iterations, thus optimizing the overall objective. It is worth noting that our algorithm
can be completed in $M$ iterations, which does not depend on the total number of clients (\ie, $N$).
Consequently, the algorithm overhead does not increase significantly with system scale and will be evaluated through the experiments in Section \ref{sec:evaluation}.

\begin{algorithm}[!t]
	\setstretch{1.1}
	\caption{Joint optimization algorithm at round $k$}
	\label{joint_algorithm}
	Initialize $\mathcal{M}^k=\emptyset$ and $\theta_n^{k}=0, \forall n$\;
	Initialize ${\mathcal{N}}'=\mathcal{N}$\;
	\For{Each iteration $i= 1,2,...,M$}
	{
		Select a diverse set of clients $\mathcal{M}^{k,i}$ from ${\mathcal{N}}'$ via submodular maximization in Section \ref{subsec:client_selection}\;
		Decide compression ratios $\theta_n^{k,i}$ for selected clients by solving optimization problem in Section \ref{subsec:ratio_decision}\;
		\If{$\sum_{n\in\mathcal{M}^{k,i}}\theta_n^{k,i}>\sum_{n\in\mathcal{M}^{k}}\theta_n^{k}$}
		{
			$\mathcal{M}^{k}\gets\mathcal{M}^{k,i}$\;
			$\theta_n^{k}\gets\theta_n^{k,i}$\;
		}
		${n}'=\arg\min_{n\in\mathcal{M}^{k,i}}\theta_n^{k,i}$\;
		${\mathcal{N}}'\gets {\mathcal{N}}'- \{{n}'\}$\;
	}
	\Return $\mathcal{M}^{k}$ and $ \theta_n^{k}$\;
\end{algorithm}

\subsection{Client Selection Strategy}\label{subsec:client_selection}
Based on theoretical analysis, we aim to select a subset of clients to minimize approximation error under resource constraints, thereby improving convergence performance.
The approximation error reflects how well the aggregated gradients of the selected clients approximate the gradients from all clients.
To this end, we introduce diversity to client selection so that the selected clients can be representative of all clients \cite{balakrishnan2022diverse}.
Submodular functions have been widely adopted to measure diversity \cite{mirzasoleiman2020coresets,minoux1978accelerated}. Formally, for any subset $\mathcal{S}\subseteq \mathcal{U}\subseteq \mathcal{N}$ and $z\in \mathcal{N}\setminus \mathcal{U}$, a set function $Q$ is submodular if $Q(\mathcal{S}\cup\{z\})-Q(\mathcal{S})\ge Q(\mathcal{U}\cup \{z\})-Q(\mathcal{U})$, which indicates $z$ is more valuable for a smaller set $\mathcal{S}$ than for a larger set $\mathcal{U}$.
The marginal gain of $z$ for a subset $\mathcal{S}$ is denoted as $Q(\mathcal{S}\cup\{z\})-Q(\mathcal{S})$.
All submodular functions have the diminishing return property, \ie, the marginal gain that an element brings to a subset diminishes as more elements are added to the subset. 
Thanks to the diminishing return property, maximizing submodular functions effectively promotes diversity and reduces the redundancy \cite{minoux1978accelerated}.

Inspired by the above facts, our algorithm minimizes the approximation error by applying submodular maximization to select diverse clients. 
Based on triangular inequality, we can derive an upper bound of the approximation error:

\vspace{-3mm}
\begin{small}
\begin{align}
	&\left \| \frac{1}{N}\sum_{n\in\mathcal{N}}\nabla F_n(\mathbf{y}_n^{k,0}) - \frac{1}{N}\sum_{n\in\mathcal{M}^k}\gamma_n^k\nabla F_n(\mathbf{y}_n^{k,0}) \right \| \nonumber\\
	\le&\frac{1}{N}\sum_{n\in\mathcal{N}} \left \| \nabla F_n(\mathbf{y}_n^{k,0})-\nabla F_{\pi^k(n)}(\mathbf{y}_{\pi^k(n)}^{k,0})  \right \|. \label{eq13}
\end{align}
\end{small}Eq. (\ref{eq13}) is minimized when the mapping $\pi^k$ assigns each $n\in\mathcal{N}$ to a client in $\mathcal{M}^k$ with the most gradient similarity:

\vspace{-3mm}
\begin{small}
\begin{align}
	&\left \| \frac{1}{N}\sum_{n\in\mathcal{N}}\nabla F_n(\mathbf{y}_n^{k,0}) - \frac{1}{N}\sum_{n\in\mathcal{M}^k}\gamma_n^k\nabla F_n(\mathbf{y}_n^{k,0}) \right \| \nonumber\\
	\le&\frac{1}{N}\sum_{n\in\mathcal{N}} \min_{i \in \mathcal{M}^k} \left \| \nabla F_n(\mathbf{y}_n^{k,0})-\nabla F_{i}(\mathbf{y}_{i}^{k,0})  \right \| = Q(\mathcal{M}^k). \label{eq14}
\end{align}
\end{small}Minimizing the upper bound $Q(\mathcal{M}^k)$ of the approximation error or maximizing $\bar{Q}(\mathcal{M}^k)$ (a constant minus its negation) is essentially equivalent to maximizing a well-known submodular function, \ie, facility location function \cite{cornuejols1977uncapacitated}. Considering the constraint $\left|\mathcal{M}^k\right| =M$, the approximation error minimization problem can be transformed into a submodular maximization problem under cardinality constraint, which is NP-hard \cite{mirzasoleiman2020coresets}.
Fortunately, the greedy algorithm has been proven to be effective in solving submodular maximization problem and provides $(1$$-$$e^{-1})$-approximation to the optimal solution \cite{nemhauser1978analysis}.

At round $k$, the greedy algorithm for minimizing the approximation error starts with an empty set $\mathcal{M}^k=\emptyset$. From the candidate set $\mathcal{N}$, the client $n\in\mathcal{N}\setminus \mathcal{M}^k$ with the largest marginal gain is constantly added to $\mathcal{M}^k$ until $\left|\mathcal{M}^k\right| =M$:
\begin{equation}\small
\mathcal{M}^k\gets \mathcal{M}^k\cup \{n^{*}\}, n^{*}=\mathop{\arg\max}\limits_{n\in\mathcal{N}\setminus \mathcal{M}^k}[\bar{Q} (\mathcal{M}^k \cup \{n\})-\bar{Q}(\mathcal{M}^k)].
\end{equation}
The computational complexity of the greedy algorithm is $\mathcal{O}(N\cdot M)$. However, in practice, the complexity can be reduced to $\mathcal{O}(N)$ using stochastic greedy algorithms \cite{mirzasoleiman2015lazier}, and further improved by lazy evaluation \cite{minoux1978accelerated} and distributed implementations \cite{mirzasoleiman2016fast}.
Consequently, the algorithm overhead can be negligible compared to the massive overhead for model training and transmission \cite{mirzasoleiman2020coresets}, which is empirically verified in Section \ref{sec:evaluation}.
Besides, it is infeasible for the PS to collect the gradients from all clients for marginal gain calculation.
For the clients whose gradients have not been collected at the current round, we estimate the marginal gain with their historical gradient information \cite{balakrishnan2022diverse}.
In a nutshell, we relate the gradient approximation to submodular maximization.
According to the marginal gain of the submodular function, we select the diverse subset of clients to minimize the approximation error between the estimated and the full gradients.

\subsection{Compression Ratio Decision Strategy}\label{subsec:ratio_decision}
We also need to determine different compression ratios for selected clients so as to minimize the compression error, which characterizes the difference between the compressed gradients and the original gradients.
The compression error satisfies the following contraction property \cite{stich2018sparsified}:

\vspace{-2mm}
\begin{small}
\begin{align}
	\mathbb{E} [\|\mathbf{\tilde{G}}^k_n-\mathbf{G}^k_n \|^2]&\le(1-\theta_n^k)\|\mathbf{G}^k_n \|^2 \nonumber\\
	&\le (1-\theta_n^k)H^2G^2.
\end{align}
\end{small}The compression ratio has two contrasting effects on the training process. 
The larger compression ratio can preserve more information from the original gradients, which reduces the compression error and thus ensures the model accuracy.
However, the communication overhead is still high under these circumstances.
Conversely, the smaller compression ratio will contribute to reducing communication overhead, but it leads to higher compression error and is more likely to deteriorate the model accuracy.
To strike a judicious trade-off between resource overhead and model accuracy, we aim to minimize the compression error under given resource constraints.

\begin{figure*}[t]
	\centering
	\subfigure[LR over MNIST]{
		\centering
		\includegraphics[width=0.23\linewidth]{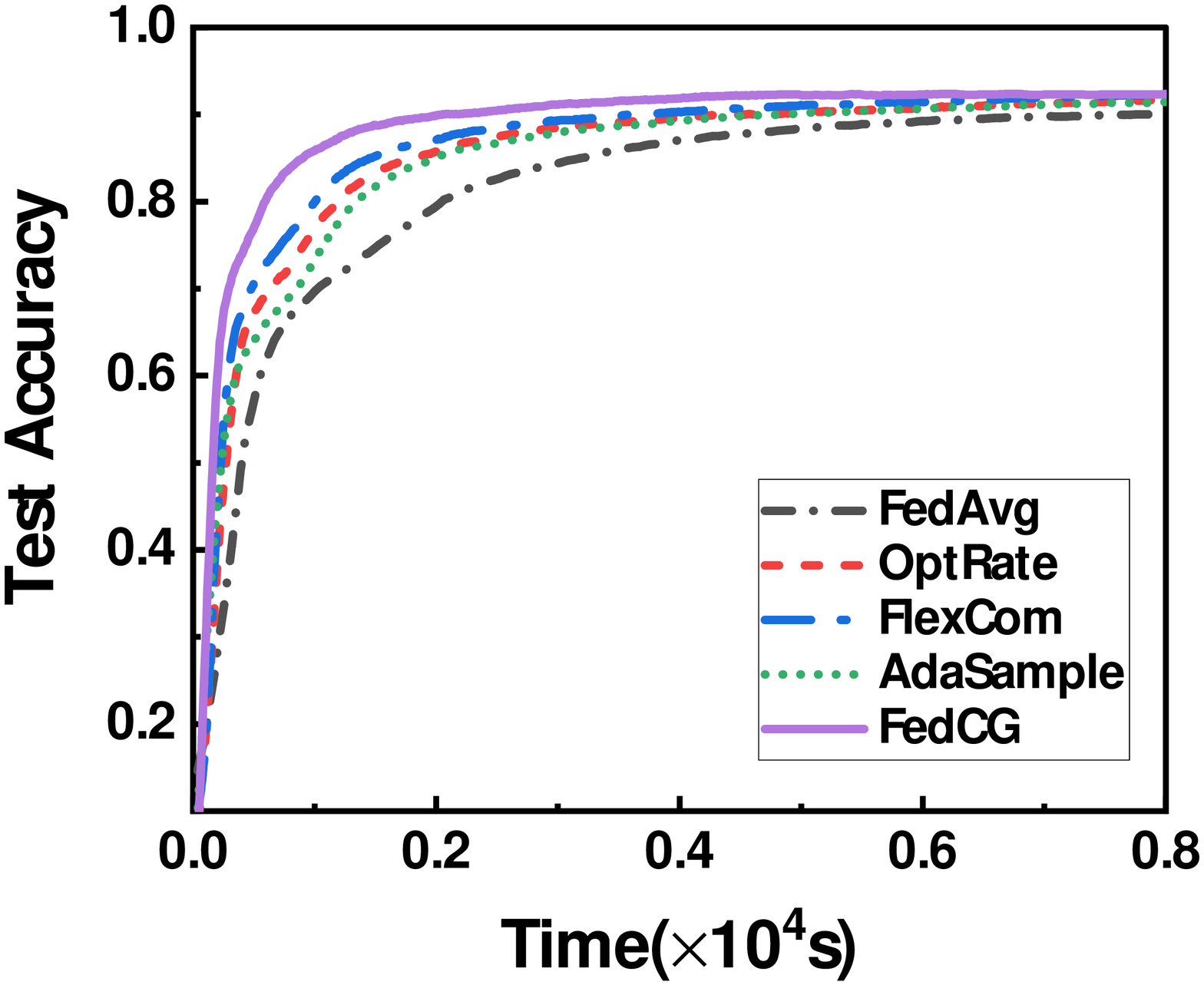}
	}
	\subfigure[AlexNet over CIFAR-10]{
		\centering
		\includegraphics[width=0.23\linewidth]{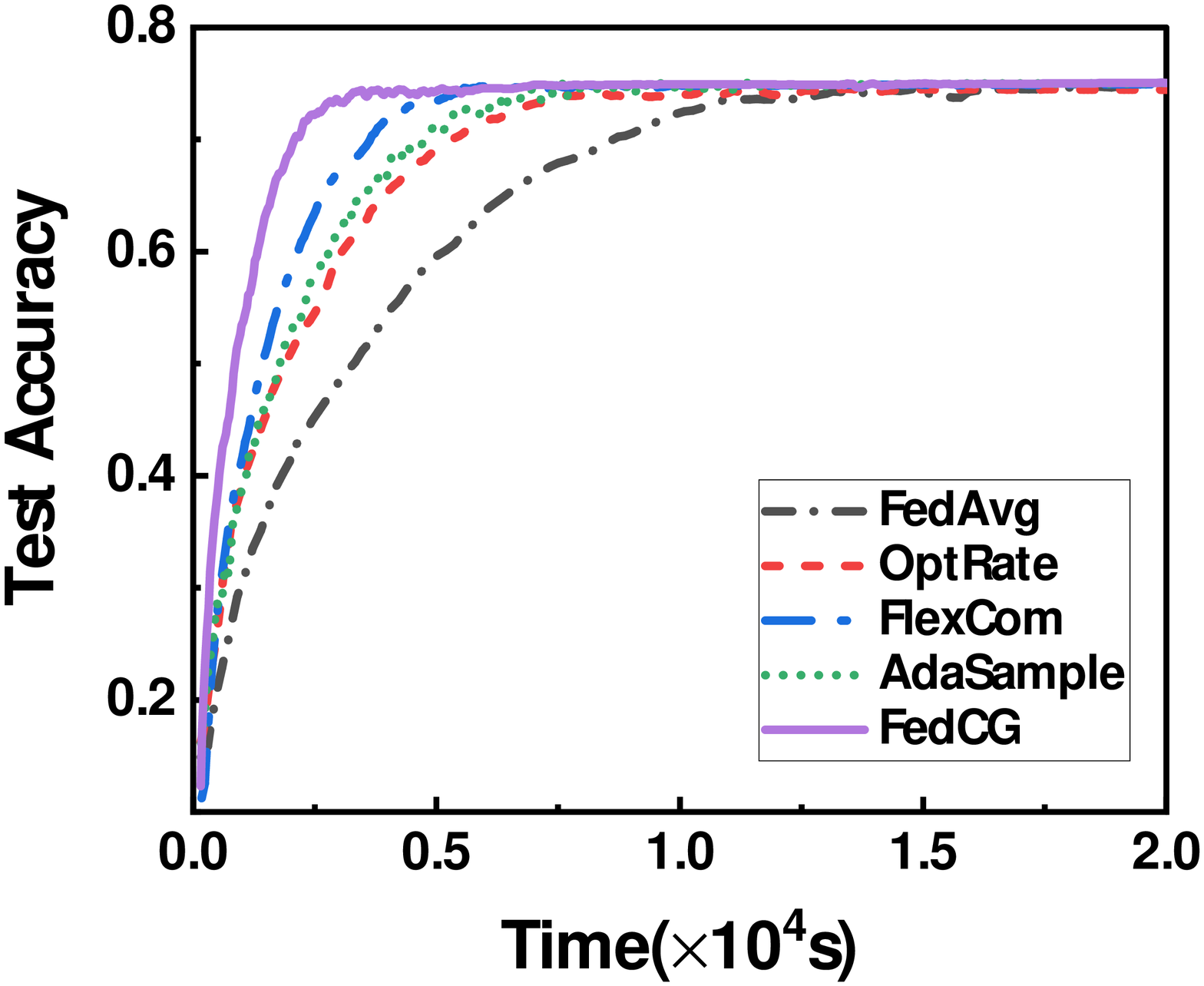}
	}
	\subfigure[ResNet9 over CIFAR-100]{
		\centering
		\includegraphics[width=0.23\linewidth]{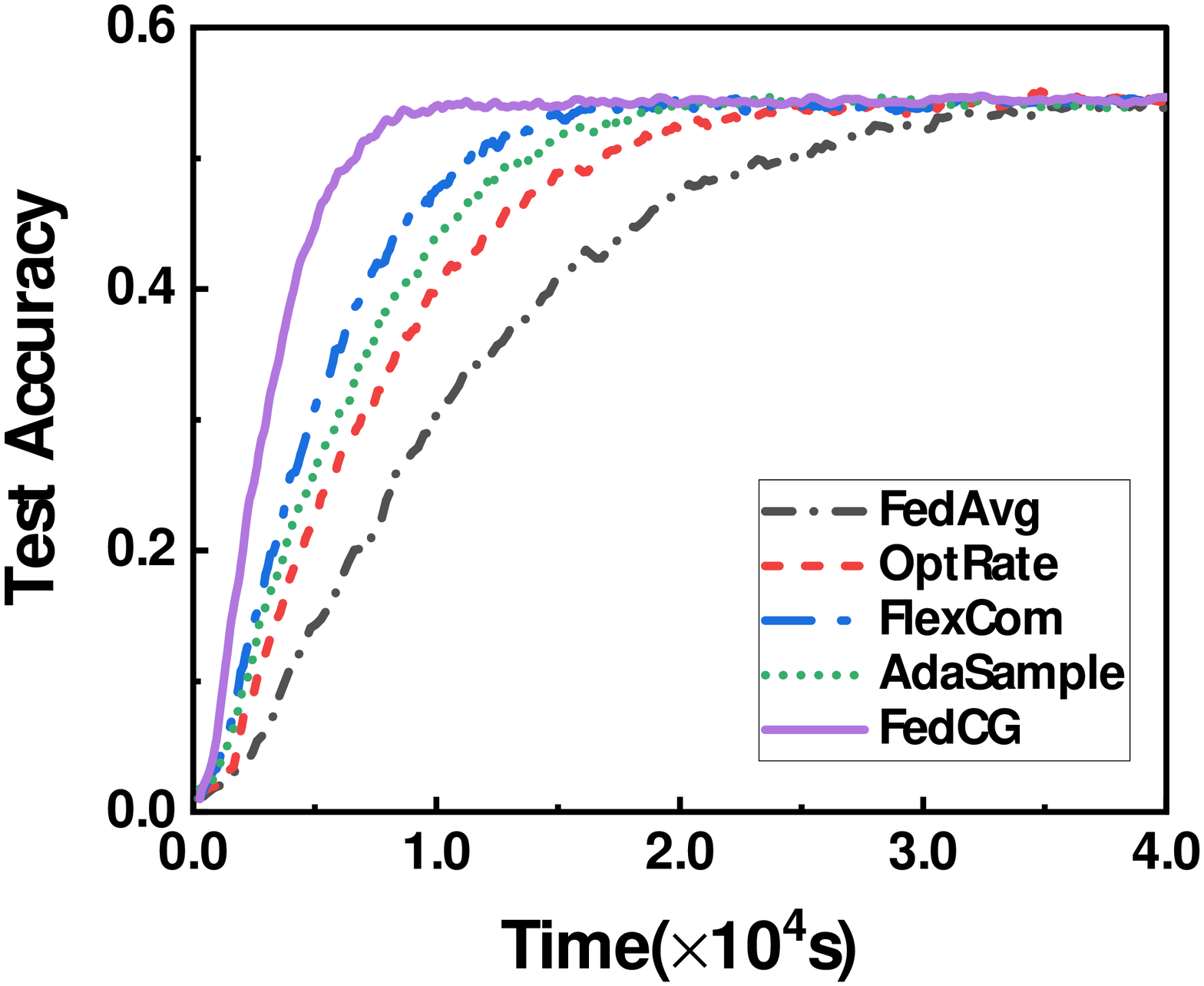}
	}
	\subfigure[ResNet18 over Tiny-ImageNet]{
		\centering
		\includegraphics[width=0.23\linewidth]{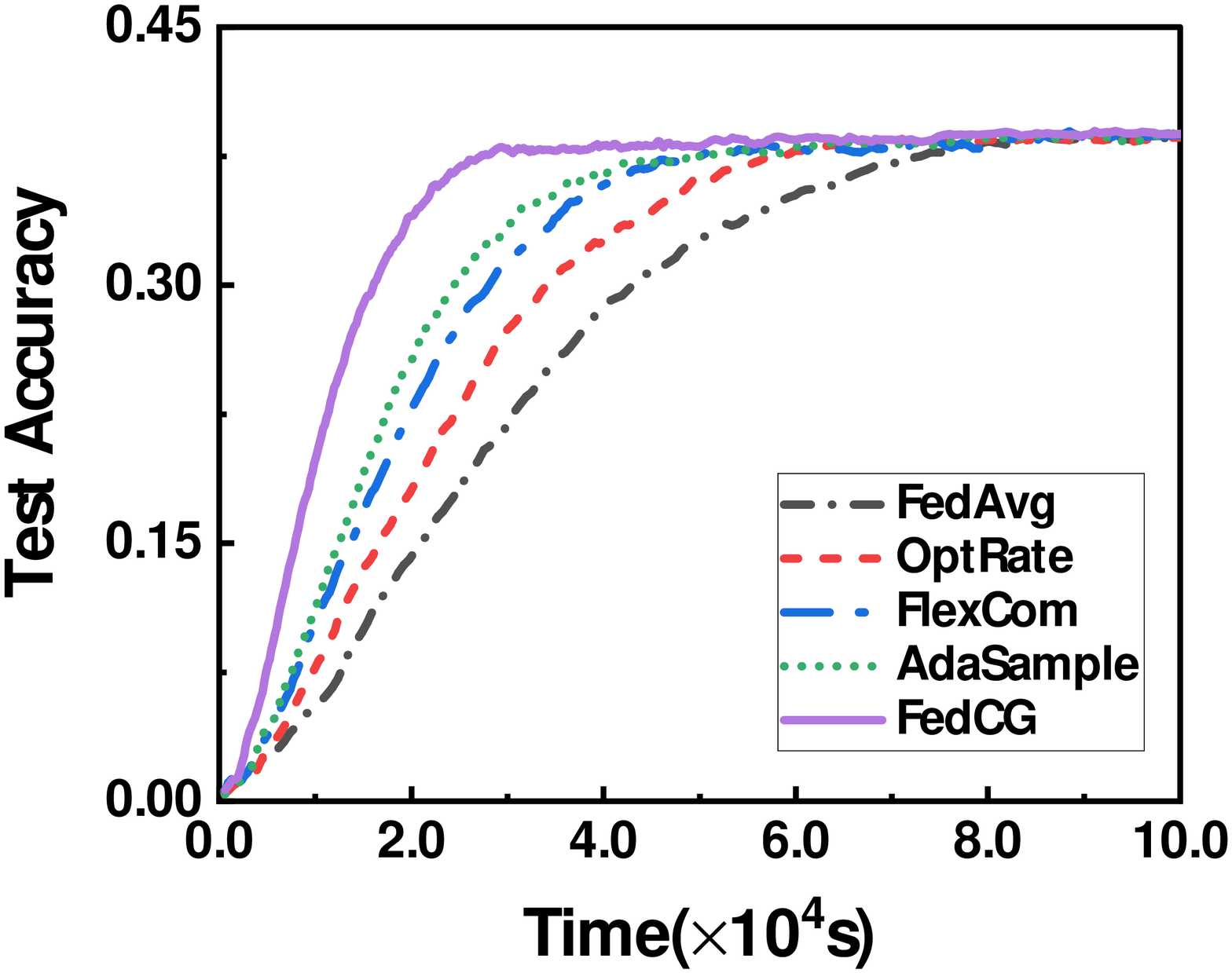}
	}
	\vspace{-1.5mm}
	\caption{Training performance of different methods on the prototype system.}\label{fig:overall}
	\vspace{-2mm}
\end{figure*}

However, the PS requires complete information of the entire training process (\eg, network conditions) to determine optimal compression ratios for selected clients.
Unfortunately, the communication conditions of wireless links are usually time-varying due to network bandwidth reallocation and clients’ mobility, and it is usually impossible to obtain this information in advance.
To overcome the unavailability of future information, we divide the long-term optimization problem into a series of one-shot problems. Given the remaining resources, we online determine the compression ratios for the current round.
Thus, the compression ratios can be continuously adjusted to accommodate system dynamics without requiring future network conditions as prior knowledge. 
Accordingly, compression ratio decision problem at round $k$ is expressed as:
\begin{equation}\small
	\min\ \sum_{n\in\mathcal{M}^k}(1-\theta_n^k)H^2G^2\nonumber
\end{equation}
\begin{equation}\label{equ:ratio}\small
	s.t.\\
	\begin{cases}
		\sum_{i=0}^{k-1}T^i+(K-k)T^k<T  \\
		0<\theta_n^k\leq1, &\forall n, \forall k \\
	\end{cases}
\vspace{-1mm}
\end{equation}
Since the above optimization problem is a linear programming (LP) problem, it can be optimally solved using LP solver (\eg, PuLP \cite{mitchell2011pulp}).
By solving the problem in Eq. (\ref{equ:ratio}), FedCG assigns different compression ratios to the selected clients according to their heterogeneous and time-varying capabilities.
Consequently, these clients adaptively compress and upload the gradients under resource constraints, preventing the ones with poor capabilities from becoming the bottleneck of FL.

\section{Performance Evaluation}\label{sec:evaluation}

\subsection{Experimental Setup}
\textbf{Experimental Platforms.} We evaluate the performance of FedCG in both a physical testbed and a simulated environment.
The prototype system helps us capture real-world resource overhead (\eg, time and traffic consumption) and the simulation system is used to evaluate larger-scale FL scenarios with manipulative parameters.
\textit{(1) Testbed settings:} The hardware prototype system consists of an AMAX deep learning workstation as the PS and 30 commercial embedded devices as the clients. Specifically, the workstation is carrying one 8-core Intel Xeon
CPU and 4 NVIDIA GeForce RTX 2080Ti GPUs. The clients include 10 NVIDIA Jetson AGX devices, 10 NVIDIA Jetson NX devices, and 10 NVIDIA Jetson TX2 devices\footnote{https://developer.nvidia.com/embedded/community/support-resources}, which reflect capability heterogeneity of clients. All devices are interconnected via a commercial WiFi router and we develop a TCP-based socket interface for communication between the PS and clients.
\textit{(2) Simulation settings:} We simulate an FL system with 100 virtual clients (each is implemented as a process in the system). To reflect heterogeneous and dynamic network conditions, we fluctuate each client's inbound bandwidth between 10Mb/s and 20Mb/s. Considering that the outbound bandwidth is typically smaller than the inbound bandwidth in a typical WAN, we set it to fluctuate between 1Mb/s and 5Mb/s \cite{wang2022accelerating} and randomly change it every round. For computation heterogeneity, the time of one local iteration follows a Gaussian distribution whose mean and variance are from the measurements of the prototype.

\textbf{Datasets and Models.} We conduct the experiments over four datasets (\ie, MNIST, CIFAR-10, CIFAR-100, and Tiny-ImageNet), which represent a large variety of the small, middle and large training tasks in practical FL scenarios.
We adopt the \textit{convex} logistic regression (LR) model for MNIST and \textit{non-convex} deep neural networks (\eg, AlexNet, ResNet9, and ResNet18) for the other three datasets.

\textbf{Data Distribution.} To simulate various degrees of statistical data heterogeneity, we adopt two different non-IID partition schemes, \ie, latent Dirichlet allocation (LDA) and skewed label, which are widely used in previous works \cite{wang2020optimizing,wang2022accelerating,luo2021tackling}.
\textit{(1) LDA for MNIST and CIFAR-10:} $\psi$ ($\psi=$ 0.2, 0.4, 0.6, and 0.8) of the data on each client belong to one class and the remaining $1-\psi$ samples belong to other classes.
\textit{(2) Skewed label for CIFAR-100 and Tiny-ImageNet:} Each client lacks $\psi$ classes of data samples, where $\psi=$ 20, 40, and 60 for CIFAR-100, and $\psi=$ 40, 80, and 120 for Tiny-ImageNet.
In particular, we use $\psi=0$ to denote IID data. Except for the experiments on non-IID data, we shuffle the data and uniformly divide them among all clients.

\textbf{Benchmarks.} We compare the proposed framework with four benchmarks.
\textit{(1) FedAvg}\cite{mcmahan2017communication} selects clients uniformly at random and exchanges the entire models between the PS and selected clients.
\textit{(2) OptRate}\cite{cui2021optimal} adopts compression to reduce communication overhead and determines identical compression rates for clients at each round to seek the trade-off between overhead and accuracy.
\textit{(3) FlexCom}\cite{li2021talk} enables flexible compression control and allows clients to compress the gradients to different levels considering the heterogeneity in communication capabilities.
\textit{(4) AdaSample}\cite{luo2021tackling} optimizes client sampling probabilities to tackle both system and statistical heterogeneity so as to minimize FL completion time.
Since we concentrate on improving the training efficiency of FL regarding resource constraints, training models to achieve state-of-the-art accuracy is beyond the scope of this work.
Unless otherwise specified, we select $M=10$ clients to participate in training and the clients perform $H = 50$ local iterations at each round.

\subsection{Testbed Results}
\textbf{Training Performance.}
We first compare the training performance of FedCG and other methods on the prototype system.
The accuracy results with respect to training time are presented in Fig. \ref{fig:overall}.
We observe that FedCG achieves a comparable accuracy and converges much faster than the other methods for all four datasets.
Compared to the benchmarks, FedCG can provide up to 3.8$\times$ speedup for LR over MNIST, 4.9$\times$ speedup for AlexNet over CIFAR-10, 3.5$\times$ speedup for ResNet9 over CIFAR-100, and 2.7$\times$ speedup for ResNet18 over Tiny-ImageNet.
Furthermore, the accuracy of FedCG always surpasses the other benchmarks after a given time.
In particular, our framework achieves 74.94\% accuracy after training AlexNet over CIFAR-10 for 10,000s while that of FedAvg, OptRate, FlexCom and AdaSample is 72.49\%, 74.01\%, 74.83\% and 74.81\%, respectively.
These results demonstrate the advantages of FedCG for both convex and non-convex learning tasks.

\begin{table}[]
	\centering
	\renewcommand\arraystretch{1.5}
	\caption{Resource overhead of different methods to achieve the target accuracy.}\label{tb:overhead}
	\resizebox{0.5\textwidth}{!}{
		\begin{tabular}{|c|c|c|c|c|c|c|}
			\hline
			Datasets                                                                                 & Metrics     & FedAvg  & OptRate & FlexCom & AdaSample & \textbf{FedCG}   \\ \hline
			\multirow{2}{*}{\begin{tabular}[c]{@{}c@{}}MNIST\\[-1ex] (Acc=90\%)\end{tabular}}              & Time(s)     & 7754.2  & 4545.1  & 3647.2  & 4818.1    & \textbf{2057.5}  \\ \cline{2-7} 
			& Traffic(MB) & 4368.1  & 1556.4  & 898.9   & 4072.5    & \textbf{874.7}  \\ \hline
			\multirow{2}{*}{\begin{tabular}[c]{@{}c@{}}CIFAR-10\\[-1ex] (Acc=74\%)\end{tabular}}      & Time(s)     & 15932.1 & 9696.9  & 5334.3  & 6967.6    & \textbf{3261.8}  \\ \cline{2-7} 
			& Traffic(MB) & 15198.6 & 6192.8  & 2674.2  & 11983.6   & \textbf{2480.3}  \\ \hline
			\multirow{2}{*}{\begin{tabular}[c]{@{}c@{}}CIFAR-100\\[-1ex] (Acc=54\%)\end{tabular}}     & Time(s)     & 35047.9 & 24520.5 & 17726.2 & 19722.6   & \textbf{10068.7} \\ \cline{2-7} 
			& Traffic(MB) & 32550.9 & 15581.5 & 8583.2  & 34569.6   & \textbf{8401.7}  \\ \hline
			\multirow{2}{*}{\begin{tabular}[c]{@{}c@{}}Tiny-ImageNet\\[-1ex] (Acc=37\%)\end{tabular}} & Time(s)     & 68931.7 & 54918.1 & 45543.9 & 41717.9   & \textbf{25614.4} \\ \cline{2-7} 
			& Traffic(MB) & 53271.6 & 33242.0 & 21107.4 & 55849.3   & \textbf{19958.7} \\ \hline
	\end{tabular}}
	\vspace{-2mm}
\end{table}

\textbf{Resource Overhead.}
To validate the efficiency of FedCG, we record the resource overhead of different methods when they attain the target accuracy in Table \ref{tb:overhead}, including completion time and traffic consumption. Note that the target accuracy is set as the accuracy that all methods can achieve.
As summarized in Table \ref{tb:overhead}, compared with the benchmarks, FedCG reduces the training time by 55.4\% and network traffic by 50.3\% on average for reaching the same accuracy.
The reasons for such superior performance are as follows.
Compared to FedAvg and AdaSample, the solutions with model compression (\ie, OptRate, FlexCom and FedCG) can save much more network traffic. However, OptRate assigns identical compression ratios to heterogeneous clients, which exacerbates the straggler effect.
Although FlexCom and FedCG achieve similar traffic consumption by assigning different compression ratios to clients, FlexCom still produces a long training completion time without considering the computation heterogeneity.
In addition, we note that optimizing sampling probabilities like AdaSample can improve training efficiency to a certain extent, but cannot essentially address heterogeneity challenges since the clients with large capability gaps may participate in FL at the same round.
By contrast, with the assistance of adaptive client selection and gradient compression, FedCG brings significant savings in both time and traffic consumption, thereby effectively accelerating the training process of FL.

\begin{figure*}[t]
	\centering
	\subfigure[LR over MNIST]{
		\centering
		\includegraphics[width=0.23\linewidth]{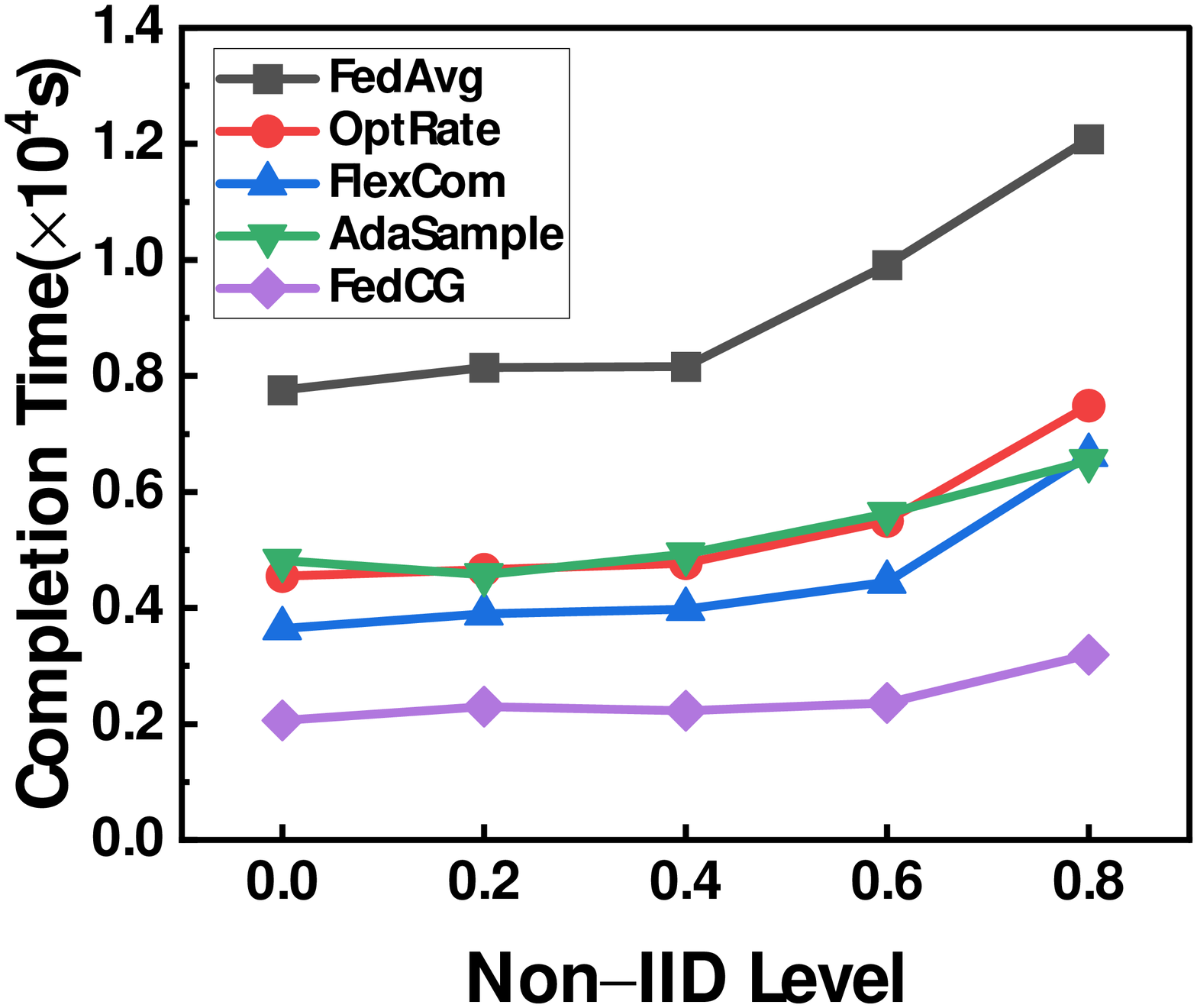}
	}
	\subfigure[AlexNet over CIFAR-10]{
		\centering
		\includegraphics[width=0.23\linewidth]{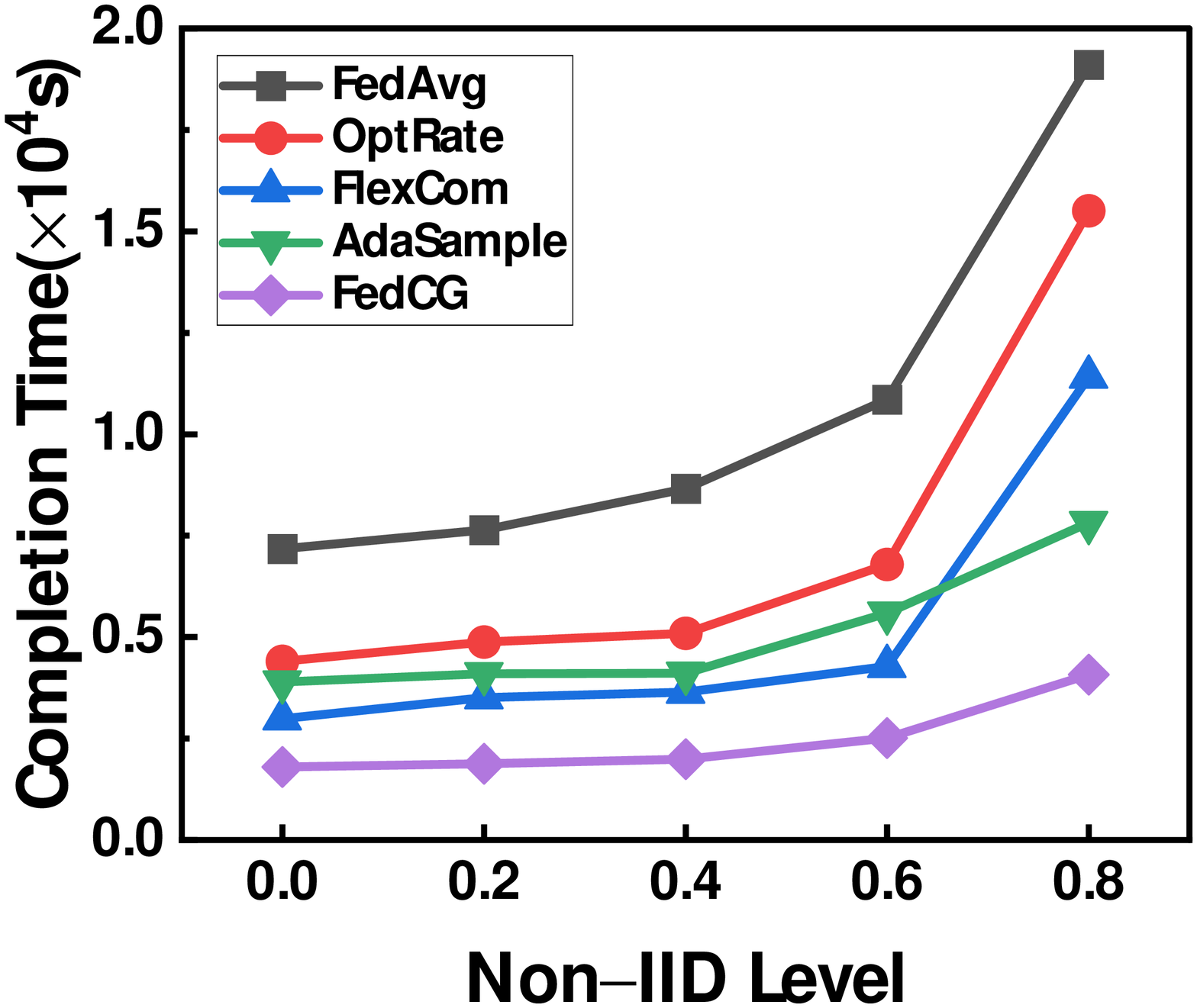}
	}
	\subfigure[ResNet9 over CIFAR-100]{
		\centering
		\includegraphics[width=0.23\linewidth]{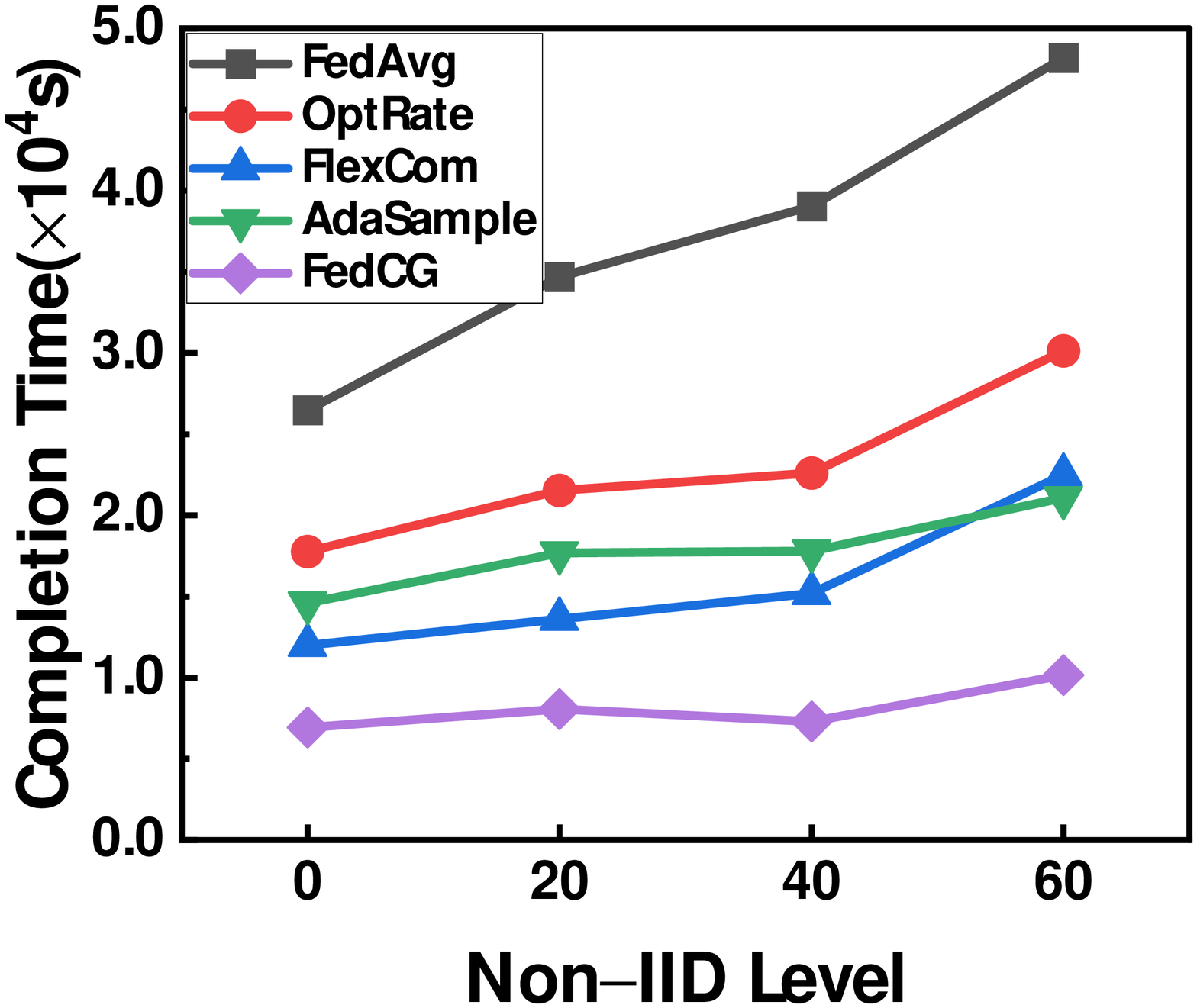}
	}
	\subfigure[ResNet18 over Tiny-ImageNet]{
		\centering
		\includegraphics[width=0.23\linewidth]{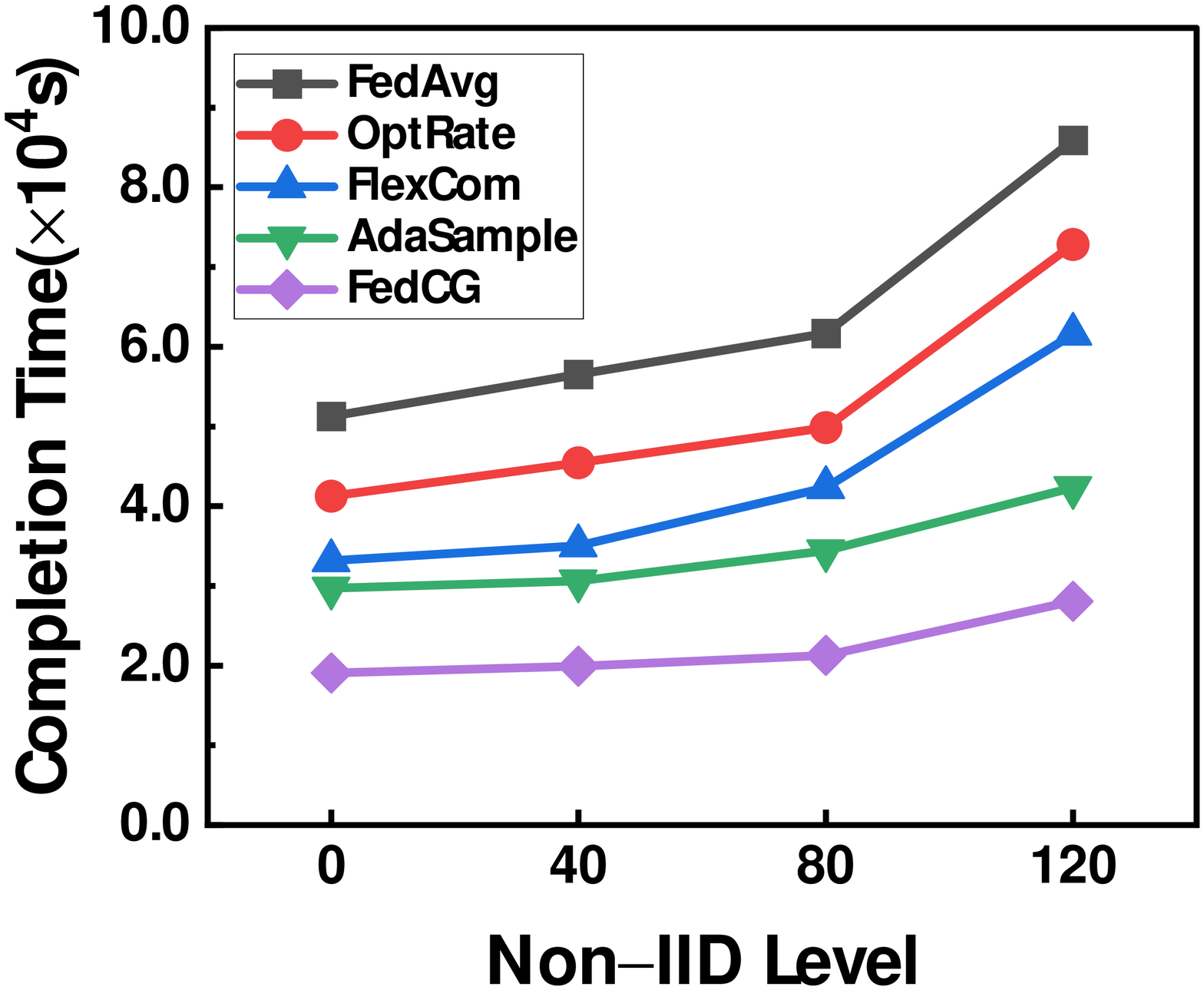}
	}
	\vspace{-2mm}
	\caption{Completion time under different levels of non-IID data.}\label{fig:noniid}
	\vspace{-4mm}
\end{figure*}

\begin{figure}[t]
	\centering
	\subfigure[LR over MNIST]{
		\centering
		\includegraphics[width=0.465\linewidth]{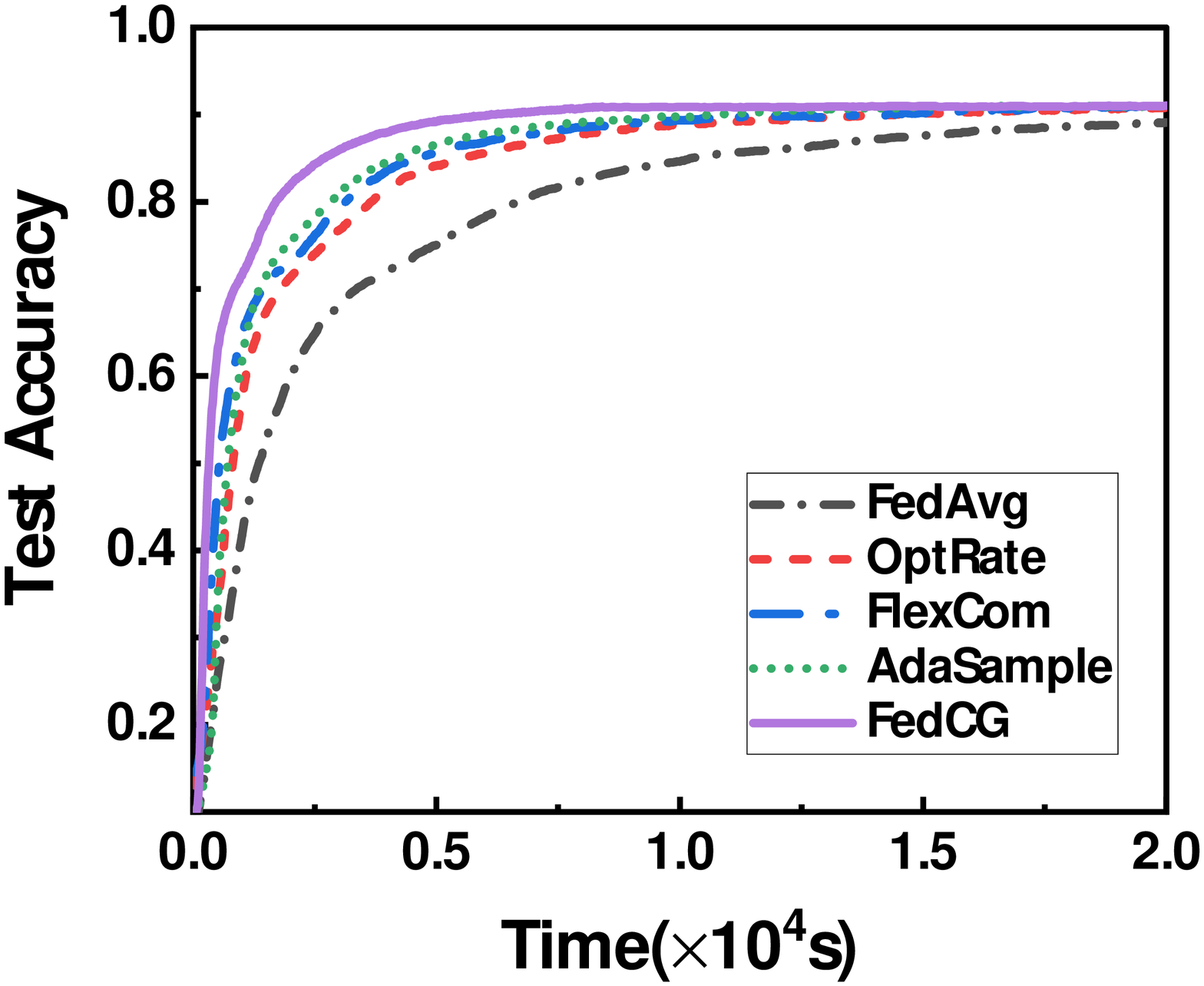}
	}
	\subfigure[AlexNet over CIFAR-10]{
		\centering
		\includegraphics[width=0.465\linewidth]{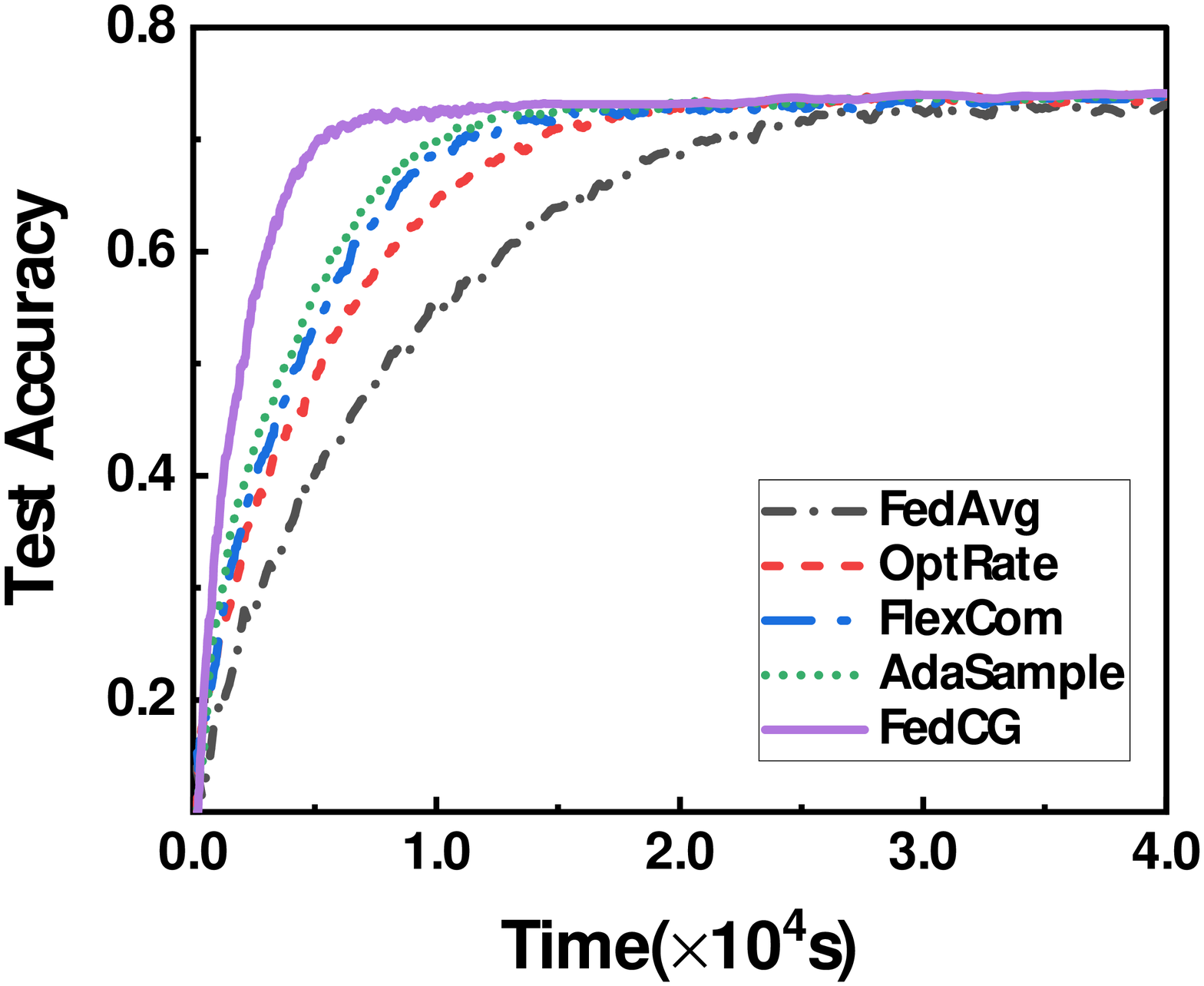}
	}
	\vspace{-1.5mm}
	\caption{Training performance in dynamic and heterogeneous simulation environments.}\label{fig:simulation}
	\vspace{-3mm}
\end{figure}

\begin{figure}[t]
	\vspace{-1mm}
	\centering
	\subfigure[Network Traffic]{
		\centering
		\includegraphics[width=0.465\linewidth]{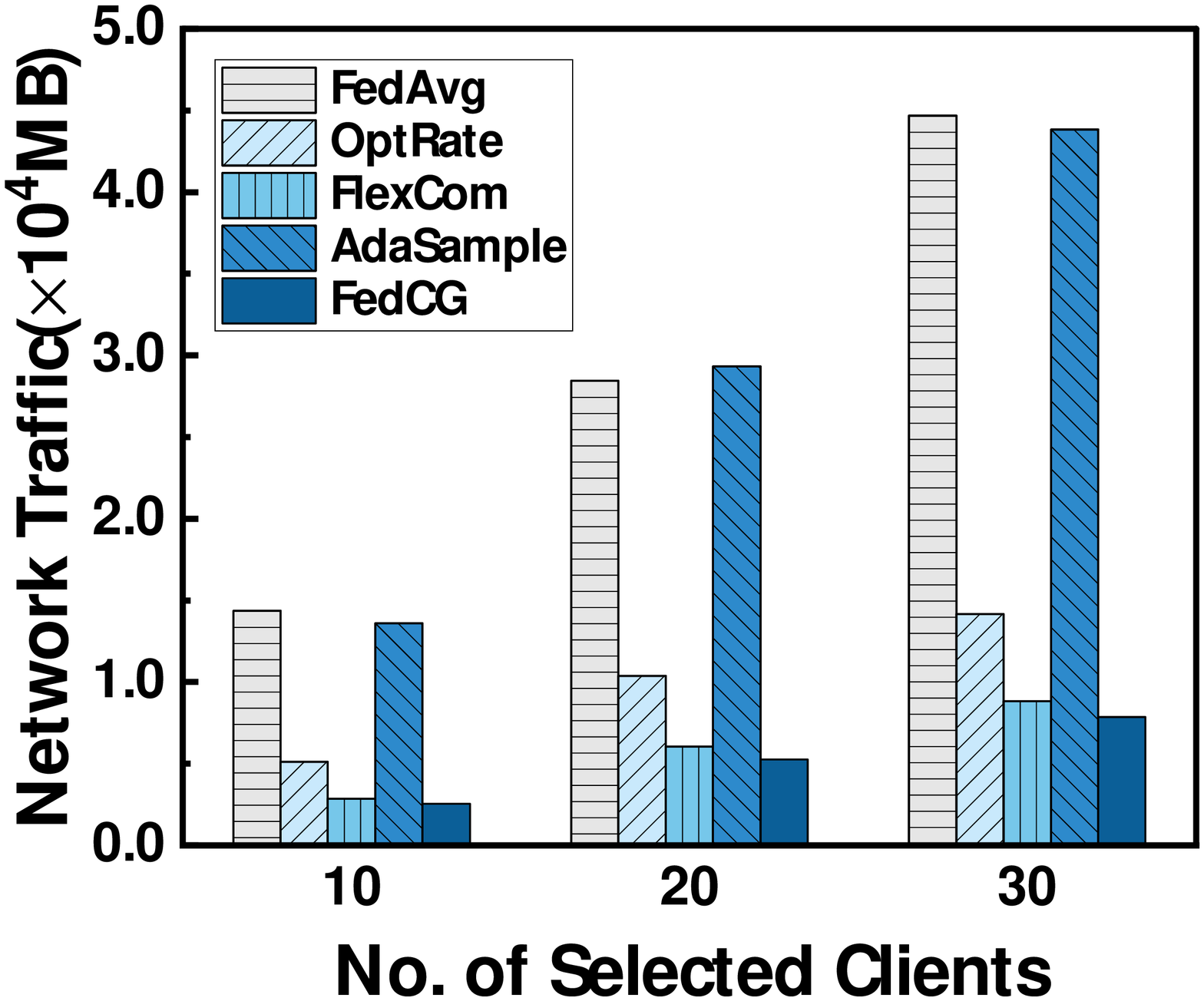}\label{fig:network_traffic}
	}
	\subfigure[Algorithm Overhead]{
		\centering
		\includegraphics[width=0.465\linewidth]{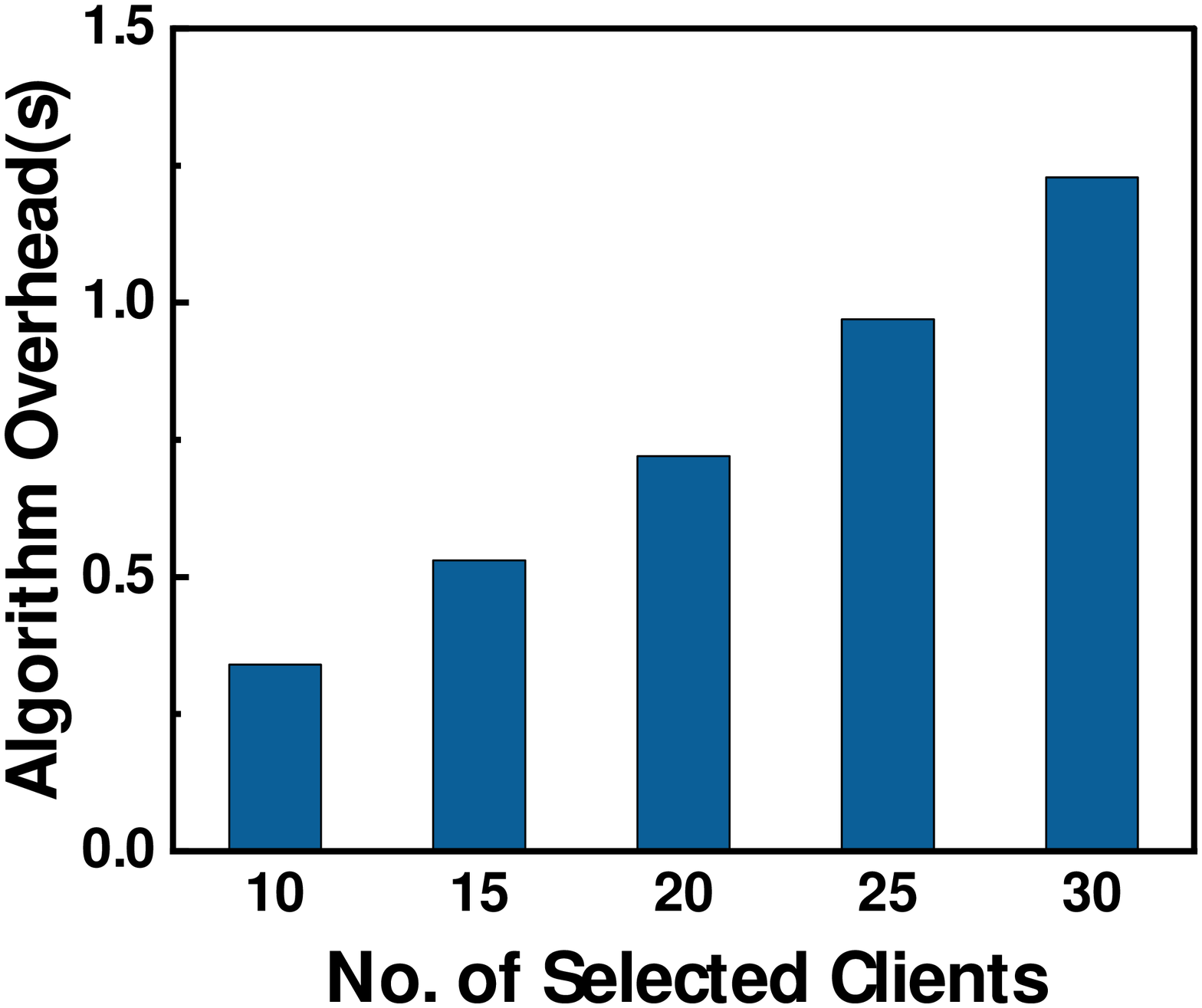}\label{fig:algorithm_overhead}
	}
	\vspace{-1.5mm}
	\caption{Effect of the number of selected clients on network traffic and algorithm overhead.}
	\vspace{-3mm}
\end{figure}

\textbf{Effect of Non-IID Data.}
We proceed to investigate how our proposed framework performs under statistical heterogeneity.
Fig. \ref{fig:noniid} depicts the required time for FedCG and benchmarks to reach the target accuracy under different levels of non-IID data.
We set the target accuracy of LR, AlexNet, ResNet9 and ResNet18 as 90\%, 67\%, 51\% and 33\%, respectively.
As shown in Fig. \ref{fig:noniid}, all methods suffer from performance degradation with the increasing skewness of data distribution.
Nevertheless, FedCG only has the slightest increase in completion time compared to the other benchmarks and exhibits robustness to non-IID data.
The advantage of FedCG is attributed to diverse client selection which increases the impact of under-represented clients, thereby promoting fairness and reducing the bias introduced by non-IID data.
In addition, the savings of resource overhead in FedCG further enlarge the performance gap as the non-IID level increases.
The speedup provided by FedCG is 4.0$\times$, 4.1$\times$, 4.3$\times$, 4.3$\times$, and 4.7$\times$ with non-IID level varying from 0 to 0.8 for AlexNet over CIFAR-10, indicating the effectiveness of FedCG for data heterogeneity.

\subsection{Simulation Results}
\textbf{Dynamic and Heterogeneous Environments.}
To evaluate the performance of FedCG in large-scale FL scenarios, we conduct our experiments by simulations with 100 clients.
Fig. \ref{fig:simulation} plots the accuracy curve of different methods in dynamic and heterogeneous environments.
We find that our proposed framework still substantially outperforms the other benchmarks in large-scale FL scenarios and exhibits faster convergence without sacrificing accuracy.
For instance, FedCG takes 5,170s to achieve 70\% accuracy for AlexNet over CIFAR-10, while the completion time of FedAvg, OptRate, FlexCom and AdaSample are 22,009s, 14,225s, 11,129s and 10,392s, respectively.
The corresponding speedups are 4.3$\times$, 2.8$\times$, 2.2$\times$ and 2.0$\times$.
Such performance gain of FedCG is rooted in appropriate client selection strategy and different compression ratios. This not only excludes the clients with poor capabilities from training but also allows each selected client to transmit compressed gradients fitting its capabilities.
Moreover, the decisions including client subset and compression ratio are continuously adjusted during training to adapt to the time-varying capabilities of clients.
The above simulation results strongly verify the usability of our design in highly dynamic and heterogeneous environments.

\begin{figure}[t]
	\centering
	\subfigure[AlexNet over CIFAR-10]{
		\centering
		\includegraphics[width=0.465\linewidth]{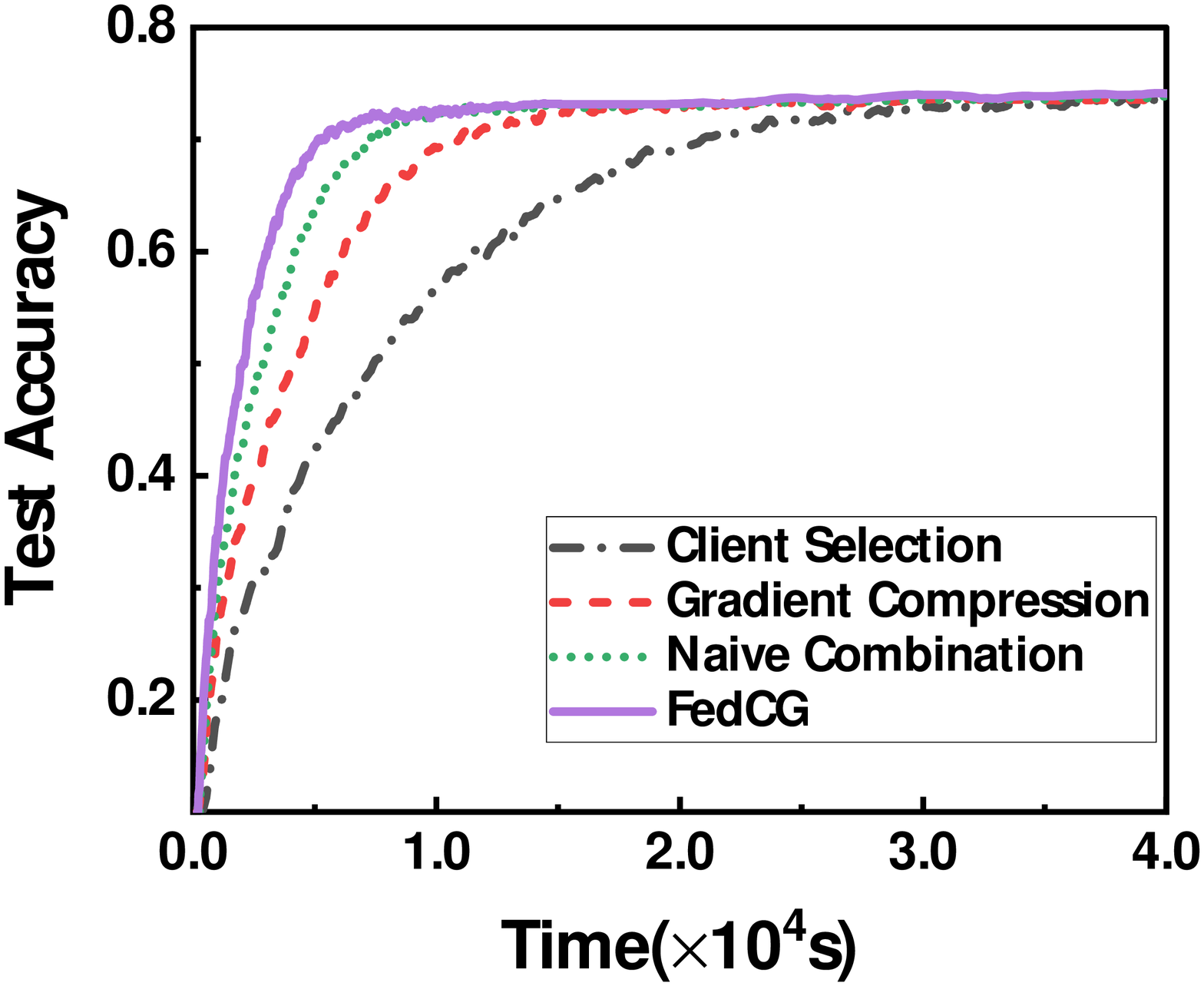}
	}
	\subfigure[ResNet9 over CIFAR-100]{
		\centering
		\includegraphics[width=0.465\linewidth]{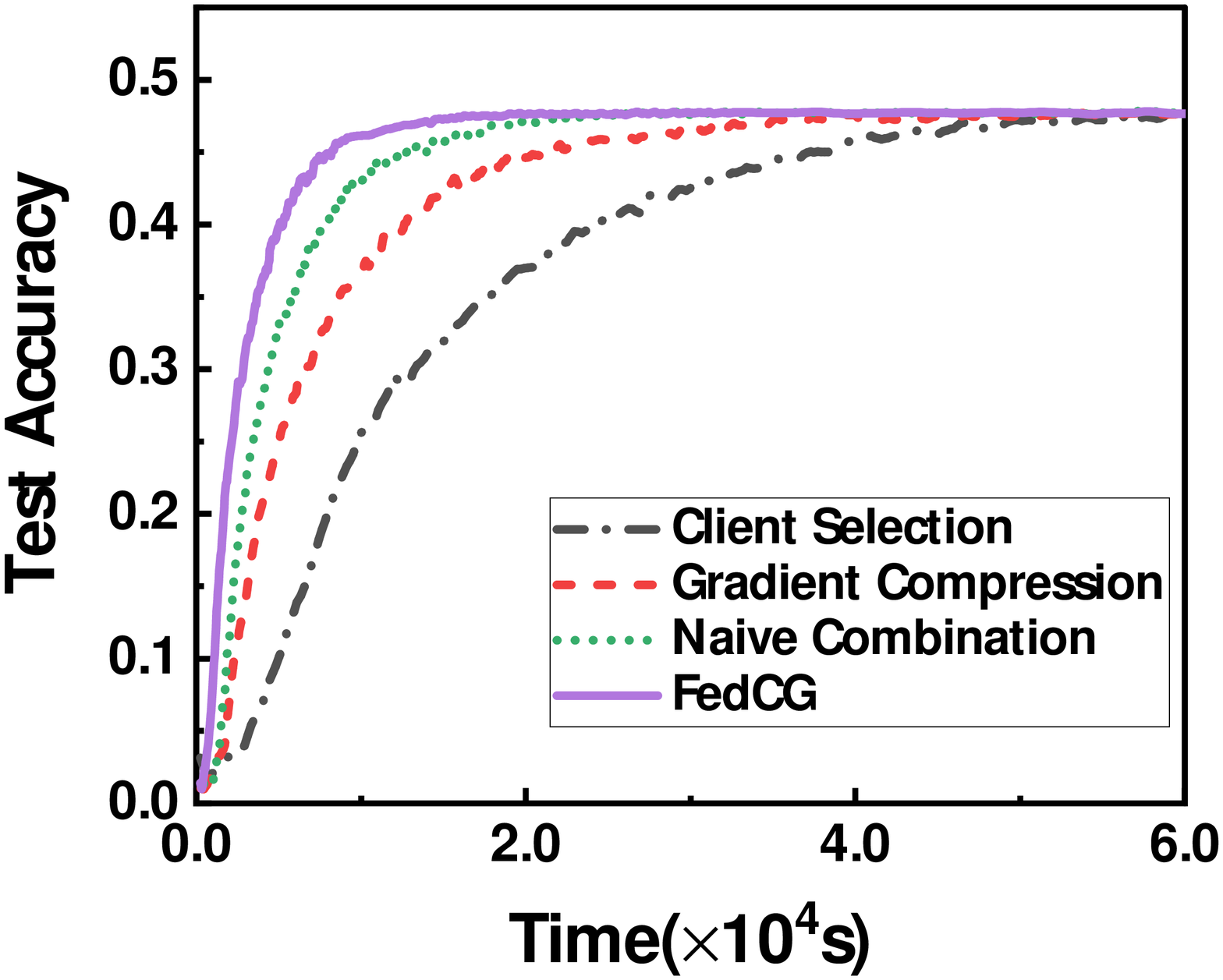}
	}
	\vspace{-1.5mm}
	\caption{Training performance of independent decision, naive combination and joint optimization.}\label{fig:joint_optimization}
	\vspace{-4mm}
\end{figure}

\textbf{Varying the Number of Selected Clients.}
We further conduct the simulation experiments to analyze the influence of the number of selected clients (\ie, $M$) on the training efficiency.
Firstly, we compare the traffic consumption of different methods for achieving the target accuracy (\eg, 90\%) when the number of selected clients increases from 10 to 30.
The results for LR over MNIST are shown in Fig. \ref{fig:network_traffic}.
Apparently, network traffic of all methods increases gradually with $M$ ranging from 10 to 30.
This is expected because more clients participate in training at each round, which will consume more communication resources to transmit model updates.
Nevertheless, FedCG outperforms other methods under different numbers of selected clients and reduces network traffic consumption by about 11.2-82.4\% compared with the four benchmarks.
Secondly, we measure the decision overhead of the proposed joint optimization algorithm with various numbers of selected clients, which is illustrated in Fig. \ref{fig:algorithm_overhead}.
Although the algorithm overhead becomes larger as $M$ increases, the maximum overhead is only 1.2s, which is much smaller than the FL training and transmission time (\eg, hundreds of seconds) and thus can be ignored.
These results suggest that the iterative optimization process of the proposed algorithm incurs a small decision overhead and will not hinder the practical deployment of FedCG in FL.

\textbf{Necessity of Joint Optimization.}
Instead of simple combination, FedCG aims to achieve efficient FL by joint optimization of client selection and gradient compression.
To indicate the importance of the proposed joint optimization algorithm, we compare the training performance of independent decision, naive combination and FedCG.
As shown in Fig. \ref{fig:joint_optimization}, it is clear that FedCG consistently converges faster than the other three methods without loss of accuracy.
Our framework provides 1.4-4.1$\times$ speedup to reach target accuracy (\eg, 70\%) for AlexNet over CIFAR-10 and 1.5-4.5$\times$ speedup to reach target accuracy (\eg, 45\%) for ResNet9 over CIFAR-100.
The explanation for this phenomenon is that client selection and compression ratio decision are tightly coupled. Consequently, neither independent decision nor naive combination can handle network dynamics and client heterogeneity well, thus negatively affecting the convergence performance.
FedCG overcomes this issue by proposing an iteration-based algorithm and demonstrates impressive performance improvement, which emphasizes the necessity of joint optimization process.

\section{Conclusion}\label{sec:conclusion}
In this paper, we propose a novel framework, called FedCG, to achieve efficient FL with adaptive client selection and gradient compression.
Specifically, FedCG selects a diverse set of clients and assigns different compression ratios to selected clients considering their heterogeneous and time-varying capabilities.
We jointly optimize client selection and compression ratio decision, which address the challenges of FL on communication efficiency, network dynamics and client heterogeneity.
Experimental results demonstrate the advantages and effectiveness of the proposed framework.

\section*{Acknowledgment}
This article is supported in part by the National Key Research and Development Program of China (Grant No. 2021YFB3301501); in part by the National Science Foundation of China (NSFC) under Grants 62132019, 62102391 and 61936015; in part by the Jiangsu Province Science Foundation for Youths (Grant No. BK20210122).

\bibliographystyle{IEEEtran}
\bibliography{content/refs}

\begin{thebibliography}{10}
\providecommand{\url}[1]{#1}
\csname url@samestyle\endcsname
\providecommand{\newblock}{\relax}
\providecommand{\bibinfo}[2]{#2}
\providecommand{\BIBentrySTDinterwordspacing}{\spaceskip=0pt\relax}
\providecommand{\BIBentryALTinterwordstretchfactor}{4}
\providecommand{\BIBentryALTinterwordspacing}{\spaceskip=\fontdimen2\font plus
\BIBentryALTinterwordstretchfactor\fontdimen3\font minus
  \fontdimen4\font\relax}
\providecommand{\BIBforeignlanguage}[2]{{%
\expandafter\ifx\csname l@#1\endcsname\relax
\typeout{** WARNING: IEEEtran.bst: No hyphenation pattern has been}%
\typeout{** loaded for the language `#1'. Using the pattern for}%
\typeout{** the default language instead.}%
\else
\language=\csname l@#1\endcsname
\fi
#2}}
\providecommand{\BIBdecl}{\relax}
\BIBdecl

\bibitem{mcmahan2017communication}
B.~McMahan, E.~Moore, D.~Ramage, S.~Hampson, and B.~A. y~Arcas,
  ``Communication-efficient learning of deep networks from decentralized
  data,'' in \emph{Artificial intelligence and statistics}.\hskip 1em plus
  0.5em minus 0.4em\relax PMLR, 2017, pp. 1273--1282.

\bibitem{kairouz2021advances}
P.~Kairouz, H.~B. McMahan, B.~Avent, A.~Bellet, M.~Bennis, A.~N. Bhagoji,
  K.~Bonawitz, Z.~Charles, G.~Cormode, R.~Cummings \emph{et~al.}, ``Advances
  and open problems in federated learning,'' \emph{Foundations and
  Trends{\textregistered} in Machine Learning}, vol.~14, no. 1--2, pp. 1--210,
  2021.

\bibitem{xu2021deepreduce}
H.~Xu, K.~Kostopoulou, A.~Dutta, X.~Li, A.~Ntoulas, and P.~Kalnis,
  ``Deepreduce: A sparse-tensor communication framework for federated deep
  learning,'' \emph{Advances in Neural Information Processing Systems},
  vol.~34, pp. 21\,150--21\,163, 2021.

\bibitem{cui2021optimal}
L.~Cui, X.~Su, Y.~Zhou, and J.~Liu, ``Optimal rate adaption in federated
  learning with compressed communications,'' in \emph{IEEE INFOCOM 2022-IEEE
  Conference on Computer Communications}.\hskip 1em plus 0.5em minus
  0.4em\relax IEEE, 2022, pp. 1459--1468.

\bibitem{nishio2019client}
T.~Nishio and R.~Yonetani, ``Client selection for federated learning with
  heterogeneous resources in mobile edge,'' in \emph{ICC 2019-2019 IEEE
  international conference on communications (ICC)}.\hskip 1em plus 0.5em minus
  0.4em\relax IEEE, 2019, pp. 1--7.

\bibitem{wang2020optimizing}
H.~Wang, Z.~Kaplan, D.~Niu, and B.~Li, ``Optimizing federated learning on
  non-iid data with reinforcement learning,'' in \emph{IEEE INFOCOM 2020-IEEE
  Conference on Computer Communications}.\hskip 1em plus 0.5em minus
  0.4em\relax IEEE, 2020, pp. 1698--1707.

\bibitem{tang2018communication}
H.~Tang, S.~Gan, C.~Zhang, T.~Zhang, and J.~Liu, ``Communication compression
  for decentralized training,'' \emph{Advances in Neural Information Processing
  Systems}, vol.~31, 2018.

\bibitem{alistarh2017qsgd}
D.~Alistarh, D.~Grubic, J.~Li, R.~Tomioka, and M.~Vojnovic, ``Qsgd:
  Communication-efficient sgd via gradient quantization and encoding,''
  \emph{Advances in Neural Information Processing Systems}, vol.~30, 2017.

\bibitem{wen2017terngrad}
W.~Wen, C.~Xu, F.~Yan, C.~Wu, Y.~Wang, Y.~Chen, and H.~Li, ``Terngrad: Ternary
  gradients to reduce communication in distributed deep learning,''
  \emph{Advances in neural information processing systems}, vol.~30, 2017.

\bibitem{bernstein2018signsgd}
J.~Bernstein, Y.-X. Wang, K.~Azizzadenesheli, and A.~Anandkumar, ``signsgd:
  Compressed optimisation for non-convex problems,'' in \emph{International
  Conference on Machine Learning}.\hskip 1em plus 0.5em minus 0.4em\relax PMLR,
  2018, pp. 560--569.

\bibitem{aji2017sparse}
A.~F. Aji and K.~Heafield, ``Sparse communication for distributed gradient
  descent,'' in \emph{Proceedings of the 2017 Conference on Empirical Methods
  in Natural Language Processing (EMNLP)}, 2017, pp. 440--445.

\bibitem{stich2018sparsified}
S.~U. Stich, J.-B. Cordonnier, and M.~Jaggi, ``Sparsified sgd with memory,''
  \emph{Advances in Neural Information Processing Systems}, vol.~31, 2018.

\bibitem{sattler2019robust}
F.~Sattler, S.~Wiedemann, K.-R. M{\"u}ller, and W.~Samek, ``Robust and
  communication-efficient federated learning from non-iid data,'' \emph{IEEE
  transactions on neural networks and learning systems}, vol.~31, no.~9, pp.
  3400--3413, 2019.

\bibitem{abdelmoniem2021dc2}
A.~M. Abdelmoniem and M.~Canini, ``Dc2: Delay-aware compression control for
  distributed machine learning,'' in \emph{IEEE INFOCOM 2021-IEEE Conference on
  Computer Communications}.\hskip 1em plus 0.5em minus 0.4em\relax IEEE, 2021,
  pp. 1--10.

\bibitem{DBLP:conf/iclr/LinHM0D18}
Y.~Lin, S.~Han, H.~Mao, Y.~Wang, and B.~Dally, ``Deep gradient compression:
  Reducing the communication bandwidth for distributed training,'' in
  \emph{International Conference on Learning Representations (ICLR)}, 2018.

\bibitem{wangni2018gradient}
J.~Wangni, J.~Wang, J.~Liu, and T.~Zhang, ``Gradient sparsification for
  communication-efficient distributed optimization,'' \emph{Advances in Neural
  Information Processing Systems}, vol.~31, 2018.

\bibitem{han2020adaptive}
P.~Han, S.~Wang, and K.~K. Leung, ``Adaptive gradient sparsification for
  efficient federated learning: An online learning approach,'' in \emph{2020
  IEEE 40th International Conference on Distributed Computing Systems
  (ICDCS)}.\hskip 1em plus 0.5em minus 0.4em\relax IEEE, 2020, pp. 300--310.

\bibitem{li2021talk}
L.~Li, D.~Shi, R.~Hou, H.~Li, M.~Pan, and Z.~Han, ``To talk or to work:
  Flexible communication compression for energy efficient federated learning
  over heterogeneous mobile edge devices,'' in \emph{IEEE INFOCOM 2021-IEEE
  Conference on Computer Communications}.\hskip 1em plus 0.5em minus
  0.4em\relax IEEE, 2021, pp. 1--10.

\bibitem{cho2020client}
Y.~J. Cho, J.~Wang, and G.~Joshi, ``Client selection in federated learning:
  Convergence analysis and power-of-choice selection strategies,'' \emph{arXiv
  preprint arXiv:2010.01243}, 2020.

\bibitem{mohammed2020budgeted}
I.~Mohammed, S.~Tabatabai, A.~Al-Fuqaha, F.~El~Bouanani, J.~Qadir, B.~Qolomany,
  and M.~Guizani, ``Budgeted online selection of candidate iot clients to
  participate in federated learning,'' \emph{IEEE Internet of Things Journal},
  vol.~8, no.~7, pp. 5938--5952, 2020.

\bibitem{chen2020optimal}
W.~Chen, S.~Horvath, and P.~Richtarik, ``Optimal client sampling for federated
  learning,'' \emph{arXiv preprint arXiv:2010.13723}, 2020.

\bibitem{balakrishnan2022diverse}
R.~Balakrishnan, T.~Li, T.~Zhou, N.~Himayat, V.~Smith, and J.~Bilmes, ``Diverse
  client selection for federated learning via submodular maximization,'' in
  \emph{International Conference on Learning Representations (ICLR)}, 2022.

\bibitem{tang2021fedgp}
M.~Tang, X.~Ning, Y.~Wang, J.~Sun, Y.~Wang, H.~Li, and Y.~Chen, ``Fedcor:
  Correlation-based active client selection strategy for heterogeneous
  federated learning,'' in \emph{Proceedings of the IEEE/CVF Conference on
  Computer Vision and Pattern Recognition}, 2022, pp. 10\,102--10\,111.

\bibitem{rizk2021optimal}
E.~Rizk, S.~Vlaski, and A.~H. Sayed, ``Optimal importance sampling for
  federated learning,'' in \emph{ICASSP 2021-2021 IEEE International Conference
  on Acoustics, Speech and Signal Processing (ICASSP)}.\hskip 1em plus 0.5em
  minus 0.4em\relax IEEE, 2021, pp. 3095--3099.

\bibitem{perazzone2022communication}
J.~Perazzone, S.~Wang, M.~Ji, and K.~S. Chan, ``Communication-efficient device
  scheduling for federated learning using stochastic optimization,'' in
  \emph{IEEE INFOCOM 2022-IEEE Conference on Computer Communications}.\hskip
  1em plus 0.5em minus 0.4em\relax IEEE, 2022, pp. 1449--1458.

\bibitem{shi2020joint}
W.~Shi, S.~Zhou, Z.~Niu, M.~Jiang, and L.~Geng, ``Joint device scheduling and
  resource allocation for latency constrained wireless federated learning,''
  \emph{IEEE Transactions on Wireless Communications}, vol.~20, no.~1, pp.
  453--467, 2020.

\bibitem{chai2020tifl}
Z.~Chai, A.~Ali, S.~Zawad, S.~Truex, A.~Anwar, N.~Baracaldo, Y.~Zhou,
  H.~Ludwig, F.~Yan, and Y.~Cheng, ``Tifl: A tier-based federated learning
  system,'' in \emph{Proceedings of the 29th International Symposium on
  High-Performance Parallel and Distributed Computing}, 2020, pp. 125--136.

\bibitem{jin2020resource}
Y.~Jin, L.~Jiao, Z.~Qian, S.~Zhang, S.~Lu, and X.~Wang, ``Resource-efficient
  and convergence-preserving online participant selection in federated
  learning,'' in \emph{2020 IEEE 40th International Conference on Distributed
  Computing Systems (ICDCS)}.\hskip 1em plus 0.5em minus 0.4em\relax IEEE,
  2020, pp. 606--616.

\bibitem{luo2021tackling}
B.~Luo, W.~Xiao, S.~Wang, J.~Huang, and L.~Tassiulas, ``Tackling system and
  statistical heterogeneity for federated learning with adaptive client
  sampling,'' in \emph{IEEE INFOCOM 2022-IEEE Conference on Computer
  Communications}.\hskip 1em plus 0.5em minus 0.4em\relax IEEE, 2022, pp.
  1739--1748.

\bibitem{DBLP:conf/iclr/LiHYWZ20}
X.~Li, K.~Huang, W.~Yang, S.~Wang, and Z.~Zhang, ``On the convergence of fedavg
  on non-iid data,'' in \emph{International Conference on Learning
  Representations (ICLR)}, 2020.

\bibitem{shi2019convergence}
S.~Shi, K.~Zhao, Q.~Wang, Z.~Tang, and X.~Chu, ``A convergence analysis of
  distributed sgd with communication-efficient gradient sparsification.'' in
  \emph{IJCAI}, 2019, pp. 3411--3417.

\bibitem{luo2021cost}
B.~Luo, X.~Li, S.~Wang, J.~Huang, and L.~Tassiulas, ``Cost-effective federated
  learning design,'' in \emph{IEEE INFOCOM 2021-IEEE Conference on Computer
  Communications}.\hskip 1em plus 0.5em minus 0.4em\relax IEEE, 2021, pp.
  1--10.

\bibitem{zhan2020incentive}
Y.~Zhan and J.~Zhang, ``An incentive mechanism design for efficient edge
  learning by deep reinforcement learning approach,'' in \emph{IEEE INFOCOM
  2020-IEEE Conference on Computer Communications}.\hskip 1em plus 0.5em minus
  0.4em\relax IEEE, 2020, pp. 2489--2498.

\bibitem{wang2022accelerating}
L.~Wang, Y.~Xu, H.~Xu, M.~Chen, and L.~Huang, ``Accelerating decentralized
  federated learning in heterogeneous edge computing,'' \emph{IEEE Transactions
  on Mobile Computing}, 2022.

\bibitem{mirzasoleiman2020coresets}
B.~Mirzasoleiman, J.~Bilmes, and J.~Leskovec, ``Coresets for data-efficient
  training of machine learning models,'' in \emph{International Conference on
  Machine Learning}.\hskip 1em plus 0.5em minus 0.4em\relax PMLR, 2020, pp.
  6950--6960.

\bibitem{minoux1978accelerated}
M.~Minoux, ``Accelerated greedy algorithms for maximizing submodular set
  functions,'' in \emph{Optimization techniques}.\hskip 1em plus 0.5em minus
  0.4em\relax Springer, 1978, pp. 234--243.

\bibitem{cornuejols1977uncapacitated}
G.~Cornuejols, M.~Fisher, and G.~L. Nemhauser, ``On the uncapacitated location
  problem,'' in \emph{Annals of Discrete Mathematics}.\hskip 1em plus 0.5em
  minus 0.4em\relax Elsevier, 1977, vol.~1, pp. 163--177.

\bibitem{nemhauser1978analysis}
G.~L. Nemhauser, L.~A. Wolsey, and M.~L. Fisher, ``An analysis of
  approximations for maximizing submodular set functions—i,''
  \emph{Mathematical programming}, vol.~14, no.~1, pp. 265--294, 1978.

\bibitem{mirzasoleiman2015lazier}
B.~Mirzasoleiman, A.~Badanidiyuru, A.~Karbasi, J.~Vondr{\'a}k, and A.~Krause,
  ``Lazier than lazy greedy,'' in \emph{Proceedings of the AAAI Conference on
  Artificial Intelligence}, vol.~29, no.~1, 2015.

\bibitem{mirzasoleiman2016fast}
B.~Mirzasoleiman, M.~Zadimoghaddam, and A.~Karbasi, ``Fast distributed
  submodular cover: Public-private data summarization,'' \emph{Advances in
  Neural Information Processing Systems}, vol.~29, 2016.

\bibitem{mitchell2011pulp}
S.~Mitchell, M.~OSullivan, and I.~Dunning, ``Pulp: a linear programming toolkit
  for python,'' \emph{The University of Auckland, Auckland, New Zealand},
  vol.~65, 2011.

\end{thebibliography}

\end{document}